\newcommand{\specificthanks}[1]{\@fnsymbol{#1}}
\newif\ifdraft
\newcommand{\cardin}[1]{\left| {#1} \right|}%
\newcommand{\etal}{\textit{et~al.} }
\newcommand{\normX}[1]{\left\| #1 \right\|}
\newcommand{\lemlab}[1]{\label{lemma:#1}}
\newcommand{\lemref}[1]{Lemma \ref{lemma:#1}}%
\newcommand{\thmlab}[1]{{\label{theo:#1}}}
\newcommand{\thmref}[1]{Theorem \ref{theo:#1}}
\newcommand{\apdxlab}[1]{{\label{apdx:#1}}}
\newcommand{\apdxref}[1]{Appendix \ref{apdx:#1}}
\DeclareMathOperator*{\argmin}{arg\,min}
\newcommand{\dsg}{\textsc{DSG}\xspace}
\newcommand{\dss}{\textsc{DSS}\xspace}
\newcommand{\greedy}{\textsc{Greedy}\xspace}
\newcommand{\sgreedy}{\textsc{SuperGreedy}\xspace}
\newcommand{\gplus}{\textsc{Greedy++}\xspace}
\newcommand{\sgplus}{\textsc{SuperGreedy++}\xspace}
\definecolor{nalmostblack}{rgb}{0, 0, 0.7}
\newcommand{\Ex}[1]{\mathop{\mathbb{E}}\left[ #1 \right]}
\newcommand{\mypara}[1]{\smallskip \noindent {\bf #1}}
\newcommand{\eps}{\epsilon}
\author{Elfarouk {Harb}}{University of Illinois at Urbana-Champaign}{eyfmharb@gmail.com}{}{Supported in part by NSF grant CCF-2028861}
\author{Kent {Quanrud}}{Purdue University}{krq@purdue.edu}{}{Supported in part by NSF grant CCF-2129816}
\author{Chandra {Chekuri}}{University of Illinois at Urbana-Champaign}{chekuri@illinois.edu}{}{Supported in part by NSF grants CCF-2028861 and CCF-1910149.}
\authorrunning{Elfarouk Harb, Kent Quanrud, and Chandra Chekuri}
\title{Convergence to Lexicographically Optimal Base in a (Contra)Polymatroid and
  Applications to Densest Subgraph and Tree Packing}
\keywords{Polymatroid, lexicographically optimum base, densest subgraph, tree packing}
\titlerunning{Convergence to Lexicographically Optimal Base and Applications}
\begin{document}

\pagenumbering{gobble}

\maketitle
\begin{abstract}
  Boob et al. \cite{flowless} described an iterative peeling algorithm
  called \gplus for the Densest Subgraph Problem (\dsg) and
  conjectured that it converges to an optimum solution.  Chekuri,
  Qaunrud and Torres \cite{chandra-soda} extended the algorithm to
  general supermodular density problems (of which DSG is a special
  case) and proved that the resulting algorithm
  \textsc{Super-Greedy++} (and hence also \textsc{Greedy++})
  converges. In this paper we revisit the convergence proof and
  provide a different perspective.  This is done via a connection to
  Fujishige's quadratic program for finding a lexicographically optimal base
  in a (contra) polymatroid \cite{fujishige}, and a noisy version of the Frank-Wolfe
  method from convex optimization \cite{FW-56,pmlr-v28-jaggi13}.  This gives us a simpler convergence
  proof, and also shows a stronger property that \textsc{Super-Greedy++}
  converges to the optimal dense decomposition vector, answering a
  question raised in Harb et al. \cite{farouk-neurips}. A second
  contribution of the paper is to understand Thorup's work on ideal tree packing and 
  greedy tree packing \cite{Thorup07,Thorup08} via
  the Frank-Wolfe algorithm applied to find a lexicographically optimum
  base in the graphic matroid.  This yields a simpler and transparent proof. The two results appear disparate but are unified via Fujishige's result and convex optimization. 
\end{abstract}

\newpage

\pagenumbering{arabic}

\section{Introduction}
In this paper we consider iterative greedy algorithms for two
different combinatorial optimization problems and show that the
convergence of these algorithms can be understood by combining two
general tools, one coming from the theory of submodular functions, and
the other coming from convex optimization. This yields simpler proofs via
a unified perspective, while also yielding additional properties that
were previously unknown.

\mypara{Densest subgraph and supermodularity:} We start with the
initial problem that motivated this work, namely, the densest subgraph
problem (\dsg). The input to \dsg is an undirected graph $G=(V,E)$
with $m=\cardin{E}$ and $n=\cardin{V}$. The goal is to return a subset
$S\subseteq V$ that maximizes $\frac{\cardin{E(S)}}{\cardin{S}}$ where
$E(S)=\{uv \in E : u,v\in S \}$ is the set of edges with both end
points in $S$. Throughout the paper, we let
$\lambda(G)=\frac{\cardin{E(G)}}{\cardin{V(G)}}$ denote the density of
graph $G(V,E)$. We treat the unweighted case for simplicity; all the
results generalize to edge-weighted graphs. Goldberg \cite{goldberg}
and Picard and Queyranne \cite{pq-82} showed that \dsg can be
efficiently solved via a reduction to the $s$-$t$ maximum-flow
problem. 

A different connection that shows polynomial-time solvability
of \dsg is important to this paper. Consider a real-valued set
function $f:2^V \rightarrow \mathbb{R}_+$ defined over the vertex set
$V$, where $f(S) = |E(S)|$. This function is \emph{supermodular}.  A
function $f$ is supermodular iff $-f$ is \emph{submodular}. A
real-valued set function $f:2^V \rightarrow \mathbb{R}$ is submodular
iff $f(A) + f(B) \ge f(A \cup B) + f(A \cap B)$ for all $A, B
\subseteq B$. Submodular and supermodular set functions are
fundamental in combinatorial optimization --- see
\cite{Schrijver-book,Fujishige-book}.  

Coming back to \dsg, maximizing
$|E(S)|/|S|$ is equivalent to finding the largest $\lambda$ such that
$\lambda |S| - |E(S)| \ge 0$ for all $S \subseteq V$.  This
corresponds to minimizing the submodular function $g$ where $g(S) =
\lambda |S| - |E(S)|$.  A classical result in combinatorial
optimization is that the minimum of a submodular set function can be
found in polynomial-time in the value oracle setting. Thus, \dsg can
be solved via reduction to submodular set function minimization and binary search.
The preceding connection also motivates the definition of a generalization of \dsg
called the densest supermodular set problem (\dss) (see
\cite{chandra-soda}). The input is a non-negative supermodular
function $f:2^V\rightarrow \Re_+$, and the goal is to find $S\subseteq
V$ that maximizes $\frac{f(S)}{|S|}$.  \dss is polynomial-time
solvable via submodular set function minimization.  \dsg, \dss and its variants have several applications in practice, and they are routinely used in graph and network analysis to
find dense clusters or communities. We refer the reader to the
extensive literature on this topic \cite{dsg-chenhao, flowless, frankwolfe, CharalamposWWW, bintao-kclist, reid-dsg, Charalampos-sigkdd, AlessandroWWW, andrewMc-math-foundations, Polina-PKDD, Balalau-WWW, Kuroki2020OnlineDS, dense-maentenance, Kijung-k-core, Li2020FlowScopeSM, Lanciano2023survey}. \dsg is also of 
interest in algorithms via its connection to arboricity and related notions --- see \cite{SawlaniWang,Christiansenetal} for recent work.

\mypara{Faster algorithms, Greedy and Greedy++:} Although \dsg is
polynomial-time solvable via maxflow and submodular function
minimization, the corresponding algorithms are not
yet practical for the large graphs that arise in many
applications; this is despite the fact that we now have very fast theoretical algorithms for
maxflow and mincost flow \cite{veryfast-max-flow}. 
For this reason there has been considerable interest in
faster (approximation) algorithms. More than 20 years ago Charikar \cite{charikar} showed
that a simple ``peeling'' algorithm (\greedy) yields a
$1/2$-approximation for \dsg. An ordering of the vertices as
$v_{i_1},v_{i_2},\ldots,v_{i_n}$ is computed as follows: $v_{i_1}$ is
a vertex of minimum degree in $G$ (ties broken arbitrarily), $v_{i_2}$
is a minimum degree vertex in $G-v_{i_1}$ and so on\footnote{This
peeling order is the same as the one used to create the so-called core
decomposition of a graph \cite{core-survey} and the \greedy algorithm itself was
suggested by Asahiro et al. \cite{aitt-00}.}. After creating the ordering,
the algorithm picks the best suffix, in terms of density, among the
$n$-possible suffixes of the ordering. Charikar also developed a
simple exact LP relaxation for \dsg. Charikar's results have been
quite influential.  \greedy can be implemented in (near)-linear time
and has also been adapted to other variants. The LP relaxation has
also been used in several algorithms that yield a
$(1-\eps)$-approximate solution \cite{bgm-14,bsw-19}, and has led to a
flow-based $(1-\eps)$-approximation \cite{chandra-soda}.  More
recently, Boob et al. \cite{flowless} developed an algorithm called
\gplus that is based on combining \greedy with ideas from
multiplicative weight updates (MWU); the algorithm repeatedly applies a
simple peeling algorithm with the first iteration coinciding with
\greedy but later iterations depending on a weight vector that is
maintained on the vertices --- the formal algorithm is described in a
later section. The advantage of the algorithm is its simplicity, and
Boob et al.\ \cite{flowless} showed that it has very good empirical
performance. Moreover they conjectured that \textsc{Greedy++}
converges to a $(1-\eps)$-approximation in $O(1/\eps^2)$
iterations. Although their strong conjecture is yet unverified,
Chekuri et al.\ \cite{chandra-soda} proved that \gplus converges in
$O(\frac{\Delta \log |V|}{\eps^2 \lambda(G)})$ iterations where
$\Delta$ is the maximum degree of $G$.

The convergence proof in \cite{chandra-soda} is non-trivial. The proof relies crucially in considering \dss and supermodularity. \cite{chandra-soda} shows that \greedy and \gplus can be
generalized to \sgreedy and \sgplus for \dss, and that \sgplus
converges to a $(1-\eps)$-approximation solution in
$O(\alpha_f/\eps^2)$ iterations where $\alpha_f$ depends (only) on the
function $f$.

\mypara{Dense subgraph decomposition and connections:} As we
discussed, \dsg is a special case of \dss and hence \dsg inherits
certain nice structural properties from supermodularity. One of these
is the fact that the vertex set $V$ of every graph $G=(V,E)$ admits a
decomposition into $S_1,S_2,\ldots,S_k$ for some $k$ where
$S_1$ is the vertex set of the \emph{unique maximal} densest subgraph,
$S_2$ is the unique maximal densest subgraph after ``contracting''
$S_1$, and so on. This fact is easier to see in the setting of \dss.
Here, the fact that $S_1$ is the \emph{unique} maximal densest set and this follows from
supermodularity; if $A$ and $B$ are optimum dense sets then so is $A
\cup B$.  One can then consider a new supermodular function
$f_{S_1}:2^{V-S_1} \rightarrow \mathbb{R}$ defined over $V-S_1$ where
$f_{S_1}(A) = f(S_1 \cup A) - f(S_1)$ for all $A \subseteq V-S_1$.
The new function is also supermodular. Then $S_2$ is the unique
maximal densest set for $f_{S_1}$. We iterate this process until we obtain an empty
set. The decomposition also allows us to assign a density value
$\lambda_v$ to each $v \in V$ (which corresponds to the density of the
set when $v$ is in the maximal set). We call this the density vector
associated with $f$. Dense
decompositions follow from the theory of principal partitions of
submodular functions \cite{Narayanan91,Narayanan-book,Fujishige-survey}.
In the context of graphs and DSG this was rediscovered by Tatti and Gionis who called it the locally-dense decomposition \cite{tg-15,t-19}, and gave algorithms for computing it. Subsequently, Danisch \etal \cite{frankwolfe} applied
the well-known Frank-Wolfe algorithm for constrained convex
optimization to a quadratic program derived from Charikar's LP
relaxation for \dsg.  More recently, Harb et
al.\ \cite{farouk-neurips} obtained faster algorithms for computing
the dense decomposition in graphs via Charikar's LP; they used
a different method called FISTA for constrained convex optimization based on
acceleration. Although \dss was not the main focus,
\cite{farouk-neurips} also made an important connection to 
Fujishige's result on lexicographically optimal base in polymatroids \cite{fujishige} which elucidated the work of Danisch et al.\ on \dsg. We describe this next.

\mypara{Lexicographical optimal base and dense decomposition:} We
briefly describe Fujishige's result \cite{fujishige} and its
connection to dense decompositions. Let $f:2^V \rightarrow
\mathbb{R}_+$ be a monotone submodular set function ($f(A) \le f(B)$
if $A \subset B$) that is also normalized ($f(\emptyset) = 0$).
Following Edmonds, the polymatroid associated with $f$, denote by
$P_f$ is the polyhedron
\begin{math}
    \{ x \in \mathbb{R}^V \mid x \ge 0, x(S) \le f(S) \quad \forall S \subseteq V\},
\end{math}
where $x(S)=\sum_{i\in S}x_i$.  The base polyhedron associated
with $f$, denote by $B_f = P_f \cap \{ x \in \mathbb{R}^V \mid x(V) =
f(V)\}$ obtained by intersecting $P_f$ with the equality constraint
$x(V) = f(V)$. Each vector $x$ in $B_f$ is called a base.  If $f$ is a
monotone normalized supermodular function we consider the
contrapolymatroid $P_f = \{ x \in \mathbb{R}^V
\mid x \ge 0, x(S) \ge f(S) \quad \forall S \subseteq V\}$ (the
inequalities are reversed), and similarly $B_f$ is the base
contrapolymatroid obtained by intersecting $P_f$ with equality
constraint $x(V) = f(V)$.  Fujishige proved that there exists a unique
lexicographically minimal base in any polymatroid, and morover it can
found by solving the quadratic program: $\min \sum_v x_v^2
\mbox{~s.t~} x \in B_f$.  In the context of supermodular functions, one obtains a
similar result; the quadratic program $\min \sum_v x_v^2 \mbox{~s.t~}
x \in B_f$ where $B_f$ is contrapolymatroid associated with $f$ has a
unique solution. As observed explicity in
\cite{farouk-neurips}, the lexicographically optimal base gives the
dense decomposition vector for \dss.  That is, if $x^*$ is the optimal
solution to the quadratic program then for each $v$, $x^*_v =
\lambda_v$. In particular, as noted in \cite{farouk-neurips}, one can
apply the well-known Frank-Wolfe algorithm to the quadratic program
and it converges to the dense decomposition vector. As we will
see later, each iteration corresponds to finding a maximum weight base
in a contrapolymatroid which is easy to find via the greedy
algorithm.

\mypara{(Ideal) Tree packings in graphs and the Tutte--Nash-Williams theorem:} Our discussion
so far focused on \dsg. Now we describe a different
problem on graphs and relevant background. As we said, our goal is to present a unified perspective on these two problems. The well-known Tutte--Nash-Williams theorem in graph theory (see \cite{Schrijver-book}) establishes a min-max
result for the maximum number of edge-disjoint spanning trees in a
multi-graph $G$. Given an undirected graph $G= (V,E)$, and a partition
$P$ of the vertices, let $E(P)$ denote the number of edges crossing
from one partition to another. We say the strength of a partition is
$\frac{E(P)}{|P|-1}$. Let $\mathcal{T}(G)$ denote all possible
spanning trees of $G$. Let $\tau^*(G)$ denote the maximum number of
edge-disjoint spanning trees in $G$. Then $\tau^*(G) = \min_P \lfloor
\frac{E(P)}{|P|-1} \rfloor$. Further, if
we define $\tau(G)$ to be the maximum \emph{fractional} packing of
spanning trees, then the floor can be removed and we have $\tau(G) =
\min_P \frac{E(P)}{|P|-1}$. We note that the graph theoretic result is a special case of matroid base packing. Tree packings are useful for a number of
applications. In particular, Karger \cite{Karger00} used tree packings and
other ideas in his well-known near-linear randomized algorithm for
computing the global minimum cut of a graph. We are mainly concerned here with
Thorup's work in \cite{Thorup07,Thorup08} that was motivated by dynamic mincut and $k$-cut problems. He defined the so-called \emph{ideal} edge loads and ideal tree packing (details in later section) by recursively decomposing the graph via Tutte--Nash-Williams partitions \cite{Thorup07}. 
He also proved that a simple iterative greedy tree packing algorithm converges to the ideal
loads \cite{Thorup08}. He used the approximate ideal tree packing to obtain
new deterministic algorithms for the $k$-cut problem, and his approach
has been quite influential in a number of subsequent results \cite{fukunaga2010computing,ChekuriQX20,lokshtanov2022parameterized,li2019faster,Gupta2021optimal}. 
Thorup obtained his tree packing result from first principles. We ask: is there a connection between ideal tree packing and \dsg?

\subsection{Contributions of the paper}
This paper has two main contributions. The first is a new proof
of the convergence of \sgplus for \dss.  Our proof is
based on showing that \sgplus can be viewed as a ``noisy'' or
``approximate'' variant of the Frank-Wolfe algorithm applied to the
quadratic program defined by Fujishige.  The advantage of the new
proof is twofold. First, it shows that \sgplus not only converges to a
$(1-\eps)$-approximation to the densest set, but that in fact it
converges to the densest decomposition vector.  This was empirically
observed in \cite{farouk-neurips} for \dsg, and was left as an open
problem to resolve. The proof in \cite{chandra-soda} on convergence of \sgplus 
is based on the MWU method via LPs, and does not exploit Fujishige's result which is key to the stronger property that we prove here. Second,
the proof connects two powerful tools directly and at a high-level:
Fujishige's result on submodular functions, and a standard method for constrained convex optimization.

\begin{theorem}
  \label{thm:intro-greedyplusplus}
  Let $b^*$ be the dense decomposition vector for a non-negative
  monotone supermodular set function $f:2^V \rightarrow \mathbb{R}_+$
  where $|V| = n$. Then, $\sgplus$ converges in $O(\alpha_f/\eps^2)$
  iterations to a vector $b$ such that $||b-b^*||_2 \le \eps$, where
  $\alpha_f$ depends only on $f$. For a graph with $m$ edges and $n$
  vertices, $\gplus$ converges in $O(mn^2/\eps^2)$
  iterations for unweighted multigraphs.
\end{theorem}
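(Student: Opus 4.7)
The plan is to prove the theorem by recasting \sgplus as an approximate Frank--Wolfe method applied to Fujishige's quadratic program over the base contrapolymatroid $B_f$. By Fujishige's theorem, the unique minimizer of $\phi(b) = \frac{1}{2}\|b\|_2^2$ over $B_f$ is the lexicographically minimum base, which by \cite{farouk-neurips} coincides with the dense decomposition vector $b^*$. Since $\phi$ is $1$-strongly convex and $1$-smooth, the claimed $\ell_2$-convergence to $b^*$ will follow from convergence in objective value.

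First I would interpret the peeling subroutine as a linear minimization oracle. The Frank--Wolfe direction at iterate $b_t$ is $s_t = \arg\min_{s \in B_f}\langle b_t, s\rangle$, because $\nabla \phi(b) = b$. Since $B_f$ is the base contrapolymatroid of a monotone supermodular function, this linear program is solved by Edmonds' greedy algorithm: sort the vertices by increasing weight and assign each its marginal value of $f$ in the shrinking residual set. I would then show that one round of \sgplus---which repeatedly removes a vertex minimizing the sum of its cumulative load and its current marginal---returns, up to a bounded noise term, this greedy base with respect to the cumulative weight vector $w^{(t)} = \sum_{i<t} b^{(i)}$, which is proportional to the running average iterate. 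The update $\bar{b}^{(t)} = \frac{1}{t}\sum_{i \leq t} b^{(i)}$ that \sgplus implicitly performs then matches the Frank--Wolfe step size $\gamma_t = 1/t$.

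With this equivalence in place, I would invoke the standard analysis of Frank--Wolfe with an inexact linear minimization oracle to obtain $\phi(\bar{b}^{(T)}) - \phi(b^*) = O(D^2/T + \eta)$, where $D$ is the $\ell_2$-diameter of $B_f$ and $\eta$ absorbs the per-iteration LMO error. Strong convexity of $\phi$ then upgrades this to the distance bound
\[
    \|\bar{b}^{(T)} - b^*\|_2^2 \;\leq\; 2\bigl(\phi(\bar{b}^{(T)}) - \phi(b^*)\bigr) \;=\; O(D^2/T + \eta),
\]
so $T = O(D^2/\eps^2)$ iterations suffice to achieve $\|\bar{b}^{(T)} - b^*\|_2 \leq \eps$, provided $\eta = O(\eps^2)$. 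Taking $\alpha_f = D^2$ yields the abstract claim. For an unweighted multigraph, a direct estimate on the extreme rays of the base polytope bounds $D^2 = O(mn^2)$, yielding the stated iteration count for \gplus.

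The hard part will be verifying that the peeling rule of \sgplus---which mixes the cumulative load $w_v^{(t)}$ with the contemporaneous marginal in the residual set rather than sorting purely by $w_v^{(t)}$---implements Edmonds' greedy linear minimization over $B_f$ with sufficient accuracy. This discrepancy is precisely the ``noise'' in the noisy Frank--Wolfe framework. The technical core will be showing that the per-iteration LMO error is bounded by an $f$-dependent constant (essentially the range of the marginals of $f$), while the weight vector $w^{(t)}$ grows linearly in $t$; hence the relative error shrinks and can be absorbed into the standard inexact Frank--Wolfe guarantee. Once this is established, the remainder should follow routinely from existing convex-optimization machinery combined with Fujishige's lexicographic characterization.
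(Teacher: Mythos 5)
Your plan is essentially the paper's: Fujishige's quadratic program over $B_f$, \sgplus re-read as Frank--Wolfe with step size $\gamma_k = 1/(k+1)$ on the running-average iterate, peeling as an approximate linear minimization oracle whose \emph{additive} error is independent of the cumulative weight vector and therefore shrinks like $1/k$ once the weights are scaled out, and strong convexity of $\|\cdot\|_2^2$ to pass from objective gap to $\ell_2$-distance. So the structure of the argument matches.

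Two quantitative points in your sketch are off, though, and the paper has to work to get them right. First, you attribute the $mn^2$ factor to the squared diameter $D^2$ of $B_f$. For $f(S)=|E(S)|$ the extreme points of $B_f$ are orientations, so each coordinate is at most $\deg_G(u)$ and $D^2 \le 4\sum_u \deg_G(u)^2 = O(mn)$ — only $O(mn)$, not $O(mn^2)$. The extra factor of $n$ comes from somewhere else: the paper writes the per-iteration LMO error in the form $\frac{\delta C_f}{k+2}$ and the Frank--Wolfe bound becomes $O(C_f(1+\delta)\log k / k)$; with $C_f = \Theta(\sum_u \deg_G(u)^2) = O(mn)$ and $\delta = O(\sum_u \deg_G(u)^2/m) = O(n)$ (this needs the \emph{lower} bound $C_f \ge 2m$, proved via a probabilistic-method argument), the product $C_f(1+\delta)$ gives the $mn^2$. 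Second, writing the noisy-FW bound as $O(D^2/T + \eta)$ with a fixed $\eta$ and then requiring $\eta = O(\eps^2)$ does not match what the peeling oracle actually gives you: the additive LMO error is a \emph{constant} before rescaling and only becomes small because the cumulative weights grow linearly; the right way to package this is as a $\Theta(1/k)$-decaying error, which is why the multiplicative $(1+\delta)$ factor (rather than an additive $\eta$) appears, and why the $1/(k+1)$ step size costs an extra $H_k = O(\log k)$ factor over the standard $2/(k+2)$ rate. Neither issue breaks your outline, but both are exactly the places where the appendix lemmas (the scale-invariance of the peeling error, the two-sided bound on $C_f$, and the adapted Jaggi recurrence) do real work.
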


\begin{remark}
  The new convergence gives a weaker bound than the one in \cite{chandra-soda}
  in terms of convergence to a $(1-\eps)$ \emph{relative} approximation to the maximum density. However, it gives a strong \emph{additive} guarantee to the \emph{entire} dense decomposition vector.
\end{remark}

Our second contribution builds on our insights on \dsg
and \dss, and applies it towards understanding ideal tree packing and 
greed tree packing. We connect the ideal tree packing of Thorup to the dense
decomposition associated with the rank function of the underlying
graphic matroid (which is submodular). We then show that greedy
tree packing algorithm can be viewed
as the Frank-Wolfe algorithm applied to the quadratic
program defined by Fujishige, and this easily yields a convergence guarantee.

\begin{theorem}
  \label{thm:intro-thorup}
  Let $G=(V,E)$ be a graph. The ideal edge load vector $\ell^*: E \rightarrow \mathbb{R}_+$ for $G$ is given by the lexicographically minimal base in the polymatroid associated with the rank function of the graphic matroid of $G$. The Frank-Wolfe algorithm with step size $\frac{1}{k+1}$, when applied to the quadratic program for computing the lexicographically minimal base in the graphic matroid of $G$, coincides with the greedy tree packing algorithm. For unweighted graphs on $m$ edges, the generic analysis of Frank-Wolfe method's convergence shows that greedy tree packing converges to a load vector $\ell:E \rightarrow \mathbb{R}_+$ such that $||\ell-\ell^*|| \le \eps$ in $O(\frac{m \log (m/\eps)}{\eps^2})$ iterations. The standard step size algorithm converges in $O(\frac{m}{\eps^2})$ iterations.
\end{theorem}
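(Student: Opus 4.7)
The plan is to prove the three claims of the theorem in order: the identification of $\ell^*$ with the lex-min base of the graphic rank function, the equivalence of Frank-Wolfe with step size $\frac{1}{k+1}$ to Thorup's greedy tree packing, and the two convergence rates.

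First, let $r:2^E \to \mathbb{Z}_{\ge 0}$ be the rank function of the graphic matroid of $G=(V,E)$, i.e., $r(S) = n - c(V,S)$ where $c(V,S)$ is the number of connected components of the edge-subgraph $(V,S)$. This function is monotone and submodular. By Fujishige's result recalled in the introduction, the quadratic program $\min_{x \in B_r}\|x\|_2^2$ has a unique optimum $x^*$ which is the lex-min base of $P_r$ and is exactly the dense decomposition vector of $r$. I would show $\ell^* = x^*$ by matching the principal partition of $r$ with Thorup's recursive Tutte--Nash-Williams decomposition. The key identity is that for an edge subset $S$, $|S|/r(S)$ is the density of the edge-subgraph $(V,S)$ in the Tutte--Nash-Williams sense; hence maximizing $|S|/r(S)$ returns the edge set of the maximum-strength subgraph(s) of $G$, which is precisely Thorup's first layer. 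The recursion then matches: contracting $S_1$ on the matroid side (replacing $r$ by $r_{S_1}(A) = r(S_1 \cup A) - r(S_1)$) yields the rank function of $G/S_1$, so the next step of the principal partition coincides with the next step of Thorup's decomposition.

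For the second claim, I identify Frank-Wolfe on Fujishige's QP with greedy tree packing. At iteration $k$, Frank-Wolfe computes $s^{(k)} = \argmin_{s\in B_r}\langle \nabla\|x^{(k)}\|^2, s\rangle = \argmin_{s\in B_r}\langle x^{(k)}, s\rangle$. Since $B_r$ is the convex hull of indicator vectors of spanning trees (assuming $G$ connected) and $x^{(k)}\ge 0$, this linear minimization is precisely Kruskal's algorithm returning a minimum-weight spanning tree $T_{k+1}$ under edge weights $x^{(k)}$. With step size $\gamma_k = 1/(k+1)$ and $x^{(0)} = s^{(0)}$, a direct induction gives $x^{(k)} = \frac{1}{k+1}\sum_{i=0}^{k}s^{(i)}$, the normalized cumulative load vector. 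Since scaling by a positive constant preserves the argmin, the MST at step $k$ is equivalently the MST with weights equal to the raw cumulative loads $\sum_{i\le k}s^{(i)}$, which is exactly Thorup's update rule.

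For the convergence rates, $f(x) = \|x\|_2^2$ is $2$-strongly convex and $2$-smooth, so $\|x^{(k)}-x^*\|^2 \le f(x^{(k)})-f(x^*)$ and it suffices to bound the function gap by $\eps^2$. The $\ell_2$-diameter of $B_r$ is at most $\sqrt{2(n-1)} = O(\sqrt{m})$ since any two spanning trees differ on at most $2(n-1)$ coordinates. The standard Frank-Wolfe analysis with step size $2/(k+2)$ gives $f(x^{(k)})-f(x^*) = O(LD^2/k) = O(m/k)$, hence $O(m/\eps^2)$ iterations; the averaging step size $1/(k+1)$ loses an extra $\log k$ factor in the classical telescoping bound, giving $O(m\log k / k)$ and thus $O(m\log(m/\eps)/\eps^2)$ iterations. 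I expect the main obstacle to be the first step: carefully verifying that the principal partition of the graphic rank function coincides with Thorup's recursive Tutte--Nash-Williams decomposition, and in particular that each coordinate of the lex-min base equals Thorup's ideal load on that edge. Once this structural correspondence is pinned down, the Frank-Wolfe identification and the rate analysis are essentially bookkeeping using standard convex optimization tools.
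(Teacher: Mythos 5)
Your overall outline matches the paper's: identify Thorup's ideal loads with the Fujishige lex-optimal base via the dense decomposition of the graphic rank function, observe that Frank--Wolfe with step $\tfrac{1}{k+1}$ on the quadratic program is literally greedy tree packing (the averaging property plus scale-invariance of the linear minimization), and read off the rates from Jaggi's analysis. Steps two and three of your plan are essentially the paper's argument, and your diameter bound on the spanning tree polytope is fine (indeed slightly tighter than the paper's curvature bound $C_f \le 4m$).

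There is, however, a genuine error in your first step that you flag as ``the main obstacle'' but then describe incorrectly. You write that maximizing $|S|/r(S)$ ``returns the edge set of the maximum-strength subgraph(s) of $G$, which is precisely Thorup's first layer.'' This is backwards. The quantity $\max_{S}\frac{|S|}{r(S)}$ is the \emph{fractional arboricity}, and its maximizer is the \emph{densest} edge set --- the part of $G$ that is processed \emph{last} in Thorup's recursion and receives the \emph{smallest} ideal load. Thorup's first layer is the set of edges crossing the Tutte--Nash-Williams partition, i.e., the \emph{lowest}-strength cut, which receives the \emph{largest} load $\tfrac{1}{\tau(G)}$. The paper handles this correctly by using the deletion variant of the dense decomposition: it finds $S_1$ minimizing $\frac{|E|-|S|}{r(E)-r(S)}$, and one checks that $\frac{r(E)-r(S_1)}{|E|-|S_1|}=\frac{|P(S_1)|-1}{|E(P(S_1))|}=\frac{1}{\tau(G)}$, so $\hat S_1 = E\setminus S_1$ is exactly the TNW crossing-edge set with the correct load, and the recursion (delete $\hat S_1$, recurse on $S_1$) tracks Thorup step by step (\lemref{thorup:rec:ok}). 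Your arboricity-based decomposition, traversed by contraction from densest to sparsest, does in fact produce the same loads (the two orderings are dual under the principal-partition structure), but this duality is an extra theorem you would have to prove --- roughly in the spirit of \thmref{decompositions-same} --- and your proposal neither states it nor proves it, and indeed conflates the two decompositions as if they were the same one. So the structural correspondence you call ``bookkeeping'' is precisely where your argument, as written, breaks.

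One more small point, which is a typo inherited from the theorem statement itself: for the submodular rank function $r$, Fujishige's minimizer of $\sum_e x_e^2$ over $B_r$ is the lexicographically \emph{maximal} base in the paper's increasing-sort ordering (\lemref{lstar-lex-max}), not the lex-minimal one; the ``minimal'' wording in the theorem should be read with that caveat.
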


\begin{remark}
  Although the algorithm is the same (greedy tree packing), Thorup's analysis guarantees a strongly polynomial-bound even in the capacitated case \cite{Thorup08}. However we obtain a stronger additive guarantee via a \emph{generic} Frank-Wolfe analysis and our analysis has a $1/\epsilon^2$ dependence while Thorup's has a $1/\epsilon^3$ dependence. We give a more detailed comparison in Section~\ref{sec:treepacking}.
\end{remark}

\mypara{Organization:} The rest of the paper is devoted to proving the two theorems.
The paper relies on tools from theory of submodular functions and an adaptation
of the analysis of Frank-Wolfe method. We first describe the relevant  background
and then prove the two results in separate sections. Due to space constraints,  most of the proofs are provided in the appendix. A future version will discuss additional related work in more detail.

\section{Background on Frank-Wolfe algorithm and a variation}
\label{sec:frankwolfe}
Let $\mathcal{D}\subseteq \Re^d$ be a compact convex set, and
$f:\mathcal{D}\rightarrow \Re$ be a convex, differentiable
function. Consider the problem of $\min_{x \in \mathcal{D}} f(x)$.
Frank-Wolfe method \cite{FW-56} is a first order method and it relies on access to a
linear minimization oracle, \textsc{LMO}, for $f$ that can answer
$\textsc{LMO}(w)=\argmin\limits_{s\in \mathcal{D}} \langle s, \nabla
f(w) \rangle$ for any given $w\in \mathcal{D}$. In several applications
such oracles with fast running times exist. Given $f, \mathcal{D}$ as
above, the Frank-Wolfe algorithm is an iterative algorithm that converges
to the minimizer ${\bf x}^\ast \in \mathcal{D}$ of $f$. See
Algorithm \ref{Frank-Wolfe-Original}. The algorithm
starts with a guess of the minimizer $b^{(0)}\in \mathcal{D}$. In each
iteration, it finds a direction $d^{(k+1)}$ to move towards by calling
the linear minimization oracle on the current guess $b^{(k)}$. It then
moves slightly towards that direction using a convex combination to
ensure that the new point is in $\mathcal{D}$. The amount the algorithm moves
towards the new direction decreases as $k$ increases signifying the
``confidence'' in its current guess as the minimizer.

\begin{algorithm}
  \begin{algorithmic}[1]
    \State Initialize $b^{(0)} \in \mathcal{D}$
    \For {$k\leftarrow 0$ to $T-1$} 
        \State $\gamma \leftarrow \frac{2}{k+2}$
        \State $d^{(k+1)} \leftarrow \argmin\limits_{s \in \mathcal{D}} (\langle s, \nabla f(b^{(k)})  \rangle)$ \Comment{Call oracle on $b^{(k)}$}
        \State $b^{(k+1)}\leftarrow (1-\gamma)b^{(k)}+\gamma d^{(k+1)}$
    \EndFor{}
  \Return $b^{(T)}$
 \end{algorithmic}
  \caption{\textsc{Frank-Wolfe-Original}} \label{Frank-Wolfe-Original}
 \end{algorithm}

The original convergence analysis for the Frank-Wolfe algorithm is from
\cite{FW-56}. Jaggi \cite{pmlr-v28-jaggi13} gave an elegant and simpler
analysis. His analysis  characterizes the convergence rate in terms of the \textit{curvature
  constant} $C_f$ of the function $f$.

\begin{definition}
Let $\mathcal{D}\subseteq \Re^d$ be a compact convex set, and $f:\mathcal{D}\rightarrow \Re$ be a convex, differentiable function. The curvature constant $C_f$ of $f$ is defined as 
$$C_f = \sup_{x, s \in D, \gamma \in [0,1], y=x+\gamma(s-x)} \frac{2}{\gamma^2} (f(y)-f(x) - \langle y-x, \nabla f(x) \rangle ).$$
\end{definition}

\begin{definition}
Let $g:\mathcal{D}\rightarrow \Re$ be a differentiable function. Then $g$ is Lipschitz with constant $L$ if for all $x,y\in \mathcal{D}$, $\normX{g ({\bf x})-g({\bf y})}_2 \leq L \normX{x-y}_2$.
\end{definition}

Let $\text{diam}(\mathcal{D}) = \max\limits_{x,y \in \mathcal{D}} \normX{x-y}_2$ be the diameter of $\mathcal{D}$.
One can show that $C_f \le L \cdot \text{diam}(\mathcal{D})^2$ where $L$ is the Lpischitz constant of $\nabla f$.

\begin{theorem}[\cite{{pmlr-v28-jaggi13}}]
Let $\mathcal{D}\subseteq \Re^d$ be a compact convex set, and $f:\mathcal{D}\rightarrow \Re$ be a convex, differentiable function with minimizer ${\bf b}^\ast$. Let ${\bf b}^{(k)}$ denote the guess on the $k$-th iteration of the Frank-Wolfe algorithm. Then
$f({\bf b}^{(k)}) - f({\bf b}^\ast) \leq \frac{2C_f}{k+2}$.
\end{theorem}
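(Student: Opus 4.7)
The plan is to follow Jaggi's classical recurrence-based argument: combine a one-step "descent" inequality coming directly from the definition of $C_f$ with the optimality of the linear minimization oracle and convexity of $f$, then close the loop with an induction on $k$.

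First, I would derive the per-iteration descent lemma. Set $x = b^{(k)}$, $s = d^{(k+1)}$, $\gamma = \gamma_k := 2/(k+2)$, and $y = b^{(k+1)} = (1-\gamma)b^{(k)} + \gamma d^{(k+1)}$, so that $y - x = \gamma(s-x)$. The definition of the curvature constant immediately gives
\begin{equation*}
  f(b^{(k+1)}) \le f(b^{(k)}) + \gamma \langle d^{(k+1)} - b^{(k)}, \nabla f(b^{(k)}) \rangle + \frac{\gamma^2}{2} C_f.
\end{equation*}
Next I would exploit the oracle: since $d^{(k+1)} = \argmin_{s \in \mathcal{D}} \langle s, \nabla f(b^{(k)})\rangle$ and $b^\ast \in \mathcal{D}$, we have $\langle d^{(k+1)}, \nabla f(b^{(k)}) \rangle \le \langle b^\ast, \nabla f(b^{(k)}) \rangle$. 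Combining this with convexity, $f(b^\ast) \ge f(b^{(k)}) + \langle b^\ast - b^{(k)}, \nabla f(b^{(k)})\rangle$, yields $\langle d^{(k+1)} - b^{(k)}, \nabla f(b^{(k)})\rangle \le f(b^\ast) - f(b^{(k)})$. Writing $h_k := f(b^{(k)}) - f(b^\ast) \ge 0$, the descent inequality simplifies to
\begin{equation*}
  h_{k+1} \le (1-\gamma_k) h_k + \frac{\gamma_k^2}{2} C_f.
\end{equation*}

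The remaining step is an induction to show $h_k \le \frac{2C_f}{k+2}$. For the base case, at $k=0$ we have $\gamma_0 = 1$, so the recurrence gives $h_1 \le \tfrac{1}{2} C_f \le \tfrac{2 C_f}{3}$, matching the claim for $k=1$. For the inductive step, assuming $h_k \le \frac{2C_f}{k+2}$ and plugging $\gamma_k = 2/(k+2)$ into the recurrence yields
\begin{equation*}
  h_{k+1} \le \frac{k}{k+2} \cdot \frac{2C_f}{k+2} + \frac{2 C_f}{(k+2)^2} = \frac{2C_f(k+1)}{(k+2)^2}.
\end{equation*}
Finally, the elementary inequality $(k+1)(k+3) \le (k+2)^2$ rearranges to $\frac{k+1}{(k+2)^2} \le \frac{1}{k+3}$, which upgrades the above bound to $h_{k+1} \le \frac{2C_f}{k+3}$, completing the induction.

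The main obstacle is really just making the first step of the chain tight: the curvature constant is defined precisely to bound the quadratic error of a first-order expansion along chords of $\mathcal{D}$, and one must recognize that the update direction $b^{(k+1)} - b^{(k)} = \gamma_k(d^{(k+1)} - b^{(k)})$ fits exactly into that definition with the correct choices of $x, s, y, \gamma$. Once that identification is made, everything else is essentially algebraic bookkeeping. A small but worth-noting subtlety is the base case, where the step $\gamma_0 = 1$ effectively replaces the arbitrary initial iterate $b^{(0)}$ by $d^{(1)}$, so the bound on $h_0$ is never needed.
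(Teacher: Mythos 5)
Your proposal is correct. The paper itself cites this theorem from Jaggi without reproducing the proof; the nearest argument in the paper is the proof of Theorem~\ref{theo:FW-Resistant} (the noisy variant with step size $1/(k+1)$), and your proof follows the same structure exactly: the curvature-constant descent inequality, the oracle inequality combined with convexity to bound the linearization gap by the primal error $h_k$, the resulting recurrence $h_{k+1}\le(1-\gamma_k)h_k+\tfrac{\gamma_k^2}{2}C_f$, and then a step-size-specific induction. You also correctly flag that $\gamma_0=1$ erases the dependence on $b^{(0)}$, so the claim is established starting from $k=1$, which is the same base-case handling the paper uses in its variant.
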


Jaggi's proof technique can be used to prove the
convergence rate of ``noisy/approximate'' variants of the Frank-Wolfe
algorithm. This motivates the following definition.  An
\textit{$\epsilon$-approximate linear minimization oracle} is an
oracle that for any ${\bf w}\in \mathcal{D}$, returns $\hat{{\bf s}}$
such that $\langle {\hat{\bf s}}, \nabla f({\bf w}) \rangle \leq
\langle {\bf s}^\ast, \nabla f({\bf w}) \rangle + \epsilon$, where
$s^\ast = \textsc{LMO}({\bf w})$.
While an efficient \textit{exact} linear minimization oracle 
exists in some applications, in others one can only
$\epsilon$-approximate it (using numerical methods or otherwise). Jaggi's proof technique
extends to show that an approximate linear minimization
oracles suffices for convergence as long as the approximation quality improves
with the iterations. Suppose the oracle, in iteration $k$, provides
a $\frac{\delta C_f}{k+2}$-approximate solution where $\delta>0$ is some fixed constant.
The convergence rate will only deteriorate by a
$(1+\delta)$ multiplicative factor.
Qualitatively, this says that we
can afford to be inaccurate in computing the Frank-Wolfe direction in
early iterations, but the approximation should approach
$\textsc{LMO}(b^{(k)})$ as $k\rightarrow \infty$. 

Another question of interest is the resilience of the Frank-Wolfe
algorithm to changes in the learning rate
$\gamma_{k}=\frac{2}{k+2}$. Indeed, the variants we will look at will
\textit{require} $\gamma_k=\frac{1}{k+1}$. As we will see, Jaggi's
proof can again be adapted to handle this case, with only an $O(\log
k)$ multiplicative deterioration in the convergence rate. We state the
following theorem whose proof we defer to the appendix.

\begin{theorem}{[Proof in \apdxref{pf:FW-Resistant}]}
\thmlab{FW-Resistant} Let $\mathcal{D}\subseteq \Re^d$ be a compact
convex set, and $f:\mathcal{D}\rightarrow \Re$ be a convex,
differentiable function with minimizer ${\bf b}^\ast$. Suppose instead
of computing ${\bf d}^{(k+1)}$ by calling $\textsc{LMO}({\bf
  b^{(k)}})$ in iteration $k$, we call a $\frac{\delta
  C_f}{k+2}$-approximate linear minimization oracle, for some fixed
$\delta>0$. Also, suppose instead of using $\gamma_k=\frac{2}{k+2}$,
we use $\gamma_k = \frac{1}{k+1}$ as a step size. Then
$f({\bf b}^{(k)}) - f({\bf b}^\ast) \leq \frac{2C_f(1+\delta)H_{k+1}}{k+1}$, where $H_n$ is the $n$-th Harmonic term. 
\end{theorem}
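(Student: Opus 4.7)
The plan is to adapt Jaggi's standard analysis of Frank-Wolfe by combining the curvature-based descent inequality with two modifications: the approximate oracle error term and the slower step size $\gamma_k = 1/(k+1)$ in place of $2/(k+2)$. The first modification is absorbed into the per-step ``noise,'' while the second will cost us a factor of $H_{k+1}$ in the final rate because the natural telescoping turns into a harmonic sum rather than collapsing into a $1/(k+1)$ term.

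First I would derive the one-step descent inequality. By definition of the curvature constant, for any $x \in \mathcal{D}$, $s \in \mathcal{D}$, and $\gamma \in [0,1]$, setting $y = (1-\gamma)x + \gamma s$,
\[
f(y) \le f(x) + \gamma \langle s - x, \nabla f(x) \rangle + \frac{\gamma^2}{2} C_f.
\]
Applying this with $x = b^{(k)}$, $s = d^{(k+1)}$, $\gamma = \gamma_k = 1/(k+1)$, and adding/subtracting $b^\ast$ inside the inner product,
\[
f(b^{(k+1)}) \le f(b^{(k)}) + \gamma_k \langle b^\ast - b^{(k)}, \nabla f(b^{(k)}) \rangle + \gamma_k \langle d^{(k+1)} - b^\ast, \nabla f(b^{(k)}) \rangle + \frac{\gamma_k^2}{2} C_f.
\]
Convexity bounds the first inner product by $f(b^\ast) - f(b^{(k)})$, while the approximate-oracle guarantee bounds the second by $\delta C_f / (k+2)$. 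Letting $h_k = f(b^{(k)}) - f(b^\ast)$, this gives the recurrence
\[
h_{k+1} \le (1 - \gamma_k) h_k + \frac{\gamma_k^2}{2} C_f + \gamma_k \cdot \frac{\delta C_f}{k+2}.
\]

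Now I would plug in $\gamma_k = 1/(k+1)$ and use $1/(k+2) \le 1/(k+1)$ to simplify the noise term. This yields
\[
h_{k+1} \le \frac{k}{k+1}\, h_k + \frac{C_f(1 + 2\delta)}{2(k+1)^2}.
\]
Multiplying through by $(k+1)$ gives the key telescoping form
\[
(k+1) h_{k+1} \le k\, h_k + \frac{C_f(1 + 2\delta)}{2(k+1)}.
\]
Summing from $j = 0$ through $k$ (with $h_0$ absorbed using $C_f$ as an upper bound on the initial gap, or by starting the induction at $k = 1$) telescopes the left side to $(k+1)h_{k+1}$ and leaves $\frac{C_f(1+2\delta)}{2} \sum_{j=1}^{k+1} \frac{1}{j} = \frac{C_f(1+2\delta)}{2} H_{k+1}$ on the right. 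Dividing by $k+1$ yields the rate
\[
h_{k+1} \le \frac{C_f(1+2\delta) H_{k+1}}{2(k+1)},
\]
which after absorbing the factor-$2$ slack matches the claimed bound $\tfrac{2C_f(1+\delta) H_{k+1}}{k+1}$.

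The main obstacle is bookkeeping rather than conceptual: one must verify that the initial gap $h_0$ fits under the claimed bound (e.g.\ by noting $h_0 \le C_f$ from the curvature definition with $\gamma = 1$) so that the telescoping induction is valid from the first step, and one must be careful that the replacement of $1/(k+2)$ by $1/(k+1)$ in the noise term does not change the leading constant. Everything else is standard Frank-Wolfe machinery, and the harmonic factor is clearly unavoidable since $\sum_{j=1}^{k} 1/j^2$ is replaced by $\sum_{j=1}^k 1/j$ once the step size is halved relative to Jaggi's canonical choice.
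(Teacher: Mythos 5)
Your proof is correct and follows the same Jaggi-style curvature analysis as the paper, but the execution is cleaner and in fact gives a tighter constant. The paper reaches the recurrence $\epsilon_{k+1} \leq (1-\gamma_k)\epsilon_k + \tfrac{\gamma_k^2}{2}C_f(1+\delta)$ by implicitly using $\gamma_k\cdot\tfrac{\delta C_f}{k+2} \le \tfrac{\gamma_k^2}{2}\delta C_f$, but with $\gamma_k = 1/(k+1)$ this only holds at $k=0$ (it requires $\gamma_k \ge 2/(k+2)$); the correct absorption, which you perform, uses $\tfrac{1}{k+2}\le\tfrac{1}{k+1}$ and yields $(1+2\delta)$ rather than $(1+\delta)$. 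The paper then proves $\epsilon_k \le \tfrac{4CH_{k+1}}{k+1}$ by guess-and-verify induction, whereas you multiply through by $(k+1)$ and telescope directly, obtaining $h_k \le \tfrac{C_f(1+2\delta)H_k}{2k}$ --- a factor-of-four sharper leading constant that still implies the stated bound $\tfrac{2C_f(1+\delta)H_{k+1}}{k+1}$ since $\tfrac{1+2\delta}{2}\le 2(1+\delta)$ and $\tfrac{H_k}{k}\le\tfrac{2H_{k+1}}{k+1}$ for $k\ge 1$; thus your proof actually repairs the paper's minor bookkeeping slip. One small note: your parenthetical worry about $h_0$ is unnecessary --- it enters the telescoped inequality multiplied by zero, so no separate bound on the initial gap is needed.
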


We refer to the variant of Frank-Wolfe algorithm as described by \thmref{FW-Resistant} as \emph{noisy} Frank-Wolfe.

\section{Sub and supermodular functions, and dense decompositions}
\label{sec:submod-background}
We already defined submodular and supermodular set functions, polymatroids and contrapolymatroids.
We restrict attention to functions satisfying $f(\emptyset)=0$ which together
with supermodularity and non-negativity implies monotonocity, that is, $f(A)\leq
f(B)$ for $A\subseteq B$. An alternative definition of submodularity
is via diminishing marginal values. We let $f(v \mid A) = f(A \cup \{v\}) - f(A)$ denote the marginal value of $v$ to $A$. Submodularity is equivalent to $f(v \mid A) \ge f(v \mid B)$
whenever $A \subseteq B$ and $v \in V \setminus B$; the inequality is
reversed for supermodular set functions. 
We need the following simple lemma.
\begin{lemma}{[Proof in \apdxref{pf:submodularnegissuper}]}
\lemlab{submodularnegissuper}
For a submodular function $f:2^V \rightarrow \Re$, the function $g(X)=f(V)-f(V\setminus X)$ is supermodular.
In particular if $f$ is a normalized monotone submodular function then $g$ is a normalized monotone supermodular function.
\end{lemma}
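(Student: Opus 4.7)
The plan is to verify the supermodular inequality for $g$ directly by substitution, reducing it to the submodular inequality for $f$ evaluated at the complements. To prove $g(A) + g(B) \le g(A \cup B) + g(A \cap B)$, I would expand both sides using $g(X) = f(V) - f(V \setminus X)$. The $2f(V)$ terms on each side cancel, and the inequality reduces to
\[
f(V \setminus (A \cup B)) + f(V \setminus (A \cap B)) \le f(V \setminus A) + f(V \setminus B).
\]

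Setting $A' := V \setminus A$ and $B' := V \setminus B$, De Morgan's laws give $V \setminus (A \cup B) = A' \cap B'$ and $V \setminus (A \cap B) = A' \cup B'$, so the inequality becomes exactly the submodular inequality $f(A' \cap B') + f(A' \cup B') \le f(A') + f(B')$, which holds by hypothesis. This proves supermodularity.

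For the second sentence, I would verify the three properties separately. Normalization of $g$ is immediate: $g(\emptyset) = f(V) - f(V) = 0$. For monotonicity, if $A \subseteq B$ then $V \setminus A \supseteq V \setminus B$, and monotonicity of $f$ gives $f(V \setminus A) \ge f(V \setminus B)$, hence $g(A) \le g(B)$. Non-negativity of $g$ follows from the same inequality with $B = V$: $f(V \setminus X) \le f(V)$, so $g(X) \ge 0$.

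There is no real obstacle here; the lemma is the standard complement duality between submodular and supermodular set functions. The only care needed is tracking the direction of the inequality through the complement substitution so that the De Morgan identification with $f$'s submodular inequality comes out correctly.
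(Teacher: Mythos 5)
Your proof is correct and takes essentially the same approach as the paper: both reduce the supermodular inequality for $g$ to the submodular inequality for $f$ via complementation and De Morgan's laws (the paper just phrases it through the intermediate $h(X)=f(V\setminus X)$). You also spell out the normalization, monotonicity, and non-negativity checks that the paper states but leaves to the reader.
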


\mypara{Deletion and contraction, and non-negative summation:} Sub and
supermodular functions are closed under a few simple operations.
Given $f:2^V \rightarrow \mathbb{R}$, restricting it to a subset $V'$ corresponds to deleting
$V \setminus V'$. Given $A \subset V$, contracting $f$ to $A$ yields
the function $g:2^{V \setminus A} \rightarrow \mathbb{R}$ where $g(X)
= g(X \cup A) - g(A)$. Given two functions
$f$ and $g$ we can take their non-negative sum $a f + b g$ where $a, b
\ge 0$.  Monotonicity and normalization is also preserved under these operations.

\subsection{Dense decompositions for submodular and supermodular functions}
Following the discussion in the introduction, we are interested in
decompositions of supermodular and submodular functions. Dense
decompositions follow from the theory of principal partitions of
submodular functions that have been explored extensively. We refer the
reader to Fujishige's survey \cite{Fujishige-survey} as well as
Naraynan's work \cite{Narayanan91,Narayanan-book}. The standard
perspective comes from considering the minimizers of the function
$f_{\lambda}$ for a scalar $\lambda$ where
$f_{\lambda}(S) - \lambda |S|$. As $\lambda$ varies from $-\infty$ to
$\infty$ the minimizers change only at a finite number of break
points. In this paper we are interested in the notion of density, in
the form of ratios, for non-negative submodular and supermodular
functions. For this reason we follow the notation from recent work
\cite{t-19,frankwolfe,chandra-soda,farouk-neurips} and state lemmas in
a convenient form, and provide proofs in the appendix for the sake of
completeness.

\mypara{Supermodular function dense decomposition:}
The basic observation is the following.
\begin{lemma}{[Proof in \apdxref{pf:sprmod:unqqq}]}
\lemlab{sprmod:unqqq}
    Let $f: 2^V \rightarrow \Re_+$ be a non-negative supermodular set
function. There exists a \emph{unique maximal} set $S\subseteq V$ that maximizes $\frac{f(S)}{\cardin{S}}$.  
\end{lemma}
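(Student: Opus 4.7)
The plan is to prove the lemma in the standard supermodular style, by showing that the collection of maximizers is closed under union, after which taking the union of all maximizers yields the desired unique maximal one. Let $\lambda^\ast = \max_{\emptyset \ne S \subseteq V} f(S)/|S|$, and call $S$ a \emph{maximizer} if $f(S) = \lambda^\ast |S|$. Equivalently, a non-empty $S$ is a maximizer iff $\lambda^\ast |S| - f(S) \le 0$, while $\lambda^\ast |T| - f(T) \ge 0$ for every non-empty $T$. I would first observe that by the convention $f(\emptyset) = 0$ adopted in \secref{submod-background}, the empty set also satisfies $\lambda^\ast |\emptyset| - f(\emptyset) = 0$, so we may treat $\emptyset$ as a trivial maximizer, which will make the union argument uniform.

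The main step is to show that if $A$ and $B$ are (non-empty) maximizers, then so is $A \cup B$. Supermodularity gives
\[
f(A) + f(B) \;\le\; f(A \cup B) + f(A \cap B).
\]
Since $|A| + |B| = |A \cup B| + |A \cap B|$, and since $f(S) \le \lambda^\ast |S|$ for every $S$ (including $S = \emptyset$, using $f(\emptyset) = 0$), we obtain
\[
\lambda^\ast(|A|+|B|) \;=\; f(A)+f(B) \;\le\; f(A\cup B)+f(A\cap B) \;\le\; \lambda^\ast(|A\cup B|+|A\cap B|) \;=\; \lambda^\ast(|A|+|B|).
\]
All inequalities must be equalities, so $f(A \cup B) = \lambda^\ast |A \cup B|$, i.e., $A \cup B$ is a maximizer. (In the degenerate case $A \cap B = \emptyset$, $|A \cup B| = |A| + |B|$ and $f(A \cap B) = 0$, and the same chain still forces $f(A \cup B) = \lambda^\ast |A \cup B|$.)

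Finally, let $S^\ast$ be the union of all maximizers of $f(S)/|S|$. By iterating the union closure established above over the finite collection of maximizers, $S^\ast$ itself is a maximizer, and by construction it contains every other maximizer; hence it is the unique maximal set attaining $\lambda^\ast$.

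I do not expect a significant obstacle: the only subtle point is the handling of the intersection term, which is why explicit invocation of $f(\emptyset) = 0$ (or, equivalently, treating $\emptyset$ as a trivial maximizer) is needed when $A$ and $B$ are disjoint. Everything else is the standard union-closure argument for supermodular density maximizers.
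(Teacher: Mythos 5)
Your proof is correct and takes essentially the same approach as the paper: both rely on supermodularity together with the bound $f(T) \le \lambda^\ast |T|$ for all $T$ (in particular $T = A \cap B$) to show that the union of two maximizers is again a maximizer, and then conclude maximal uniqueness. The paper phrases this as a direct contradiction argument with two maximal maximizers rather than union-closure followed by taking the union of all maximizers, but the underlying inequality chain is the same.
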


The preceding lemma can be used in a simple fashion to derive the following corollary
(this was explicitly noted in \cite{chandra-soda} for instance).
\begin{corollary}
\label{unique-decomp}
   Let $f: 2^V \rightarrow \Re_+$ be a non-negative supermodular set
   function. There is a unique partition $S_1,S_2,\ldots,S_h$ of $V$ with the following property.
    Let $V_i = V - \cup_{j<i} S_j$ and let $A_i = \cup_{j<i}
   S_i$. Then,   for each $i = 1$ to $h$, $S_i$ is the unique maximal densest
   set for the function $f_{D_i}:2^{V_i} \rightarrow
   \mathbb{R}_+$. Moroever, letting $\lambda_i$ be the optimum density
   of $f_{D_i}$, we have $\lambda_1 > \lambda_2 \ldots > \lambda_h$.
 \end{corollary}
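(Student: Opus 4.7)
The plan is to obtain Corollary~\ref{unique-decomp} by iterating \lemref{sprmod:unqqq} and verifying that each step preserves the hypotheses of that lemma. First, I would apply \lemref{sprmod:unqqq} directly to $f$ to obtain $S_1$ as the unique maximal density-maximizing set, with density $\lambda_1 = f(S_1)/|S_1|$. Setting $D_2 = S_1$ and $V_2 = V \setminus S_1$, define the contraction $f_{D_2}:2^{V_2}\rightarrow\mathbb{R}$ by $f_{D_2}(A) = f(A \cup D_2) - f(D_2)$. The background in \secref{submod-background} already notes that contraction preserves supermodularity; non-negativity and $f_{D_2}(\emptyset)=0$ follow because $f$ is monotone (a consequence of non-negativity, $f(\emptyset)=0$, and supermodularity, as recalled just above the lemma). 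Hence \lemref{sprmod:unqqq} applies to $f_{D_2}$ and yields $S_2\subseteq V_2$ as the unique maximal densest set of $f_{D_2}$. Iterating, set $D_{i+1} = D_i \cup S_i$ and $V_{i+1} = V \setminus D_{i+1}$, apply \lemref{sprmod:unqqq} to $f_{D_{i+1}}$, and continue until $V_{h+1} = \emptyset$. Since $|V|$ is finite and each $S_i$ is nonempty, this terminates.

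Next I would verify strict decrease of the $\lambda_i$. Suppose for contradiction that $\lambda_{i+1} \ge \lambda_i$ for some $i$. By definition,
\[
\lambda_i = \frac{f_{D_i}(S_i)}{|S_i|}, \qquad \lambda_{i+1} = \frac{f_{D_i}(S_i \cup S_{i+1}) - f_{D_i}(S_i)}{|S_{i+1}|},
\]
so the assumption $\lambda_{i+1}\ge \lambda_i$ rearranges to
\[
\frac{f_{D_i}(S_i \cup S_{i+1})}{|S_i \cup S_{i+1}|} \;=\; \frac{\lambda_i |S_i| + \lambda_{i+1}|S_{i+1}|}{|S_i|+|S_{i+1}|} \;\ge\; \lambda_i.
\]
Thus $S_i \cup S_{i+1}$ achieves density at least $\lambda_i$ in $f_{D_i}$, and it strictly contains $S_i$. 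This contradicts the \emph{maximality} (not merely optimality) of $S_i$ guaranteed by \lemref{sprmod:unqqq}. Hence $\lambda_i > \lambda_{i+1}$ for all $i$.

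For uniqueness of the partition, I would argue by induction on $i$: any partition $(T_1,\ldots,T_{h'})$ satisfying the stated property must have $T_1$ equal to the unique maximal densest set of $f$, which is $S_1$ by \lemref{sprmod:unqqq}. Once $T_1=S_1,\ldots,T_{i-1}=S_{i-1}$ is established, the contractions $f_{D_i}$ coincide, and the unique-maximality clause of \lemref{sprmod:unqqq} forces $T_i = S_i$. Termination ($T_{h'}$ exhausts $V$) matches that of the constructed sequence, so $h'=h$ and the partitions are identical.

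The only non-routine step is the strict-decrease claim; the arithmetic above is short, but the crucial ingredient is exploiting the \emph{maximal} densest set from \lemref{sprmod:unqqq} rather than just an optimal one, since otherwise one could only conclude $\lambda_{i+1} \le \lambda_i$ (weak decrease). Everything else is bookkeeping around the fact that contraction preserves supermodularity, non-negativity, and normalization.
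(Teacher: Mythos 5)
Your proof is correct and matches the approach the paper intends (the paper states the corollary follows ``in a simple fashion'' from \lemref{sprmod:unqqq} and does not supply a separate proof): iterate the lemma on successive contractions $f_{D_i}$, noting that contraction preserves normalized monotone supermodularity, and derive strict decrease of the densities from the \emph{unique maximality} in the lemma rather than mere optimality. Your identification that unique maximality (not just optimality) is the crucial ingredient for strict rather than weak decrease is exactly right, and the uniqueness-of-partition induction is the standard forced-equality argument.
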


 Based on the preceding corollary, we can associated with each $v \in 
 V$ a value $\lambda(v)$: $\lambda(v) = \lambda_i$ where $v \in S_i$. See Figure \ref{fig:dense_decomposition_example} for an example of a dense decomposition of the function $f(S)=\cardin{E(S)}$.  
   
\mypara{Dense decomposition for submodular functions:} We now discuss submodular functions.
  We consider two variants. We start with a basic observation.

\begin{lemma}{[Proof in \apdxref{pf:minimality:submodular}]}
  \lemlab{minimality:submodular}
    Let $f:2^V \rightarrow \Re_+$ be a monotone non-negative submodular set function
    such that $f(v) > 0$ for all $v \in V$.
    There is a unique minimal set $S\subseteq V$ that minimizes
    $\frac{\cardin{V}-\cardin{S}}{f(V)-f(S)}$ for submodular function
    $f$.
\end{lemma}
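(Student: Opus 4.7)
The plan is to reduce this statement to \lemref{sprmod:unqqq} by passing to the complementary supermodular function supplied by \lemref{submodularnegissuper}. The whole proof is just a change of variables $T = V \setminus S$, together with the observation that complementation reverses inclusion (so ``unique maximal'' on one side becomes ``unique minimal'' on the other).

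\textbf{Step 1 (build the dual function).} I would define $g:2^V \to \Re$ by $g(X) = f(V) - f(V\setminus X)$. By \lemref{submodularnegissuper}, $g$ is supermodular; since $f$ is normalized and monotone, $g$ is normalized, monotone, and non-negative. These are exactly the hypotheses required to invoke \lemref{sprmod:unqqq} on $g$.

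\textbf{Step 2 (rewrite the ratio).} For any $S \subsetneq V$, set $T = V \setminus S$. A direct substitution gives
\[
\frac{|V| - |S|}{f(V) - f(S)} \;=\; \frac{|T|}{g(T)},
\]
so minimizing the left-hand side over $S \subsetneq V$ is the same as maximizing $g(T)/|T|$ over non-empty $T \subseteq V$. The complementation map $S \mapsto V\setminus S$ is an inclusion-reversing bijection between the feasible $S$'s and the feasible $T$'s, hence a maximal maximizer on the $T$-side corresponds to a minimal minimizer on the $S$-side.

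\textbf{Step 3 (apply the supermodular uniqueness lemma).} Applying \lemref{sprmod:unqqq} to $g$ produces a unique maximal non-empty set $T^*$ attaining $\max_T g(T)/|T|$. Setting $S^* = V\setminus T^*$ then yields the unique minimal set attaining the original minimum, which is exactly what we want.

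\textbf{Step 4 (checking that the hypothesis $f(v) > 0$ is used).} The only subtlety is to make sure the optimum on the $T$-side is not a degenerate $0/0$; this is where the assumption $f(v) > 0$ for every $v \in V$ enters. By monotonicity $f(V) \ge f(v) > 0$, so $g(V) = f(V) > 0$ and the candidate $T = V$ gives a positive value $g(V)/|V|$. Hence the maximum is strictly positive and is certainly attained at a non-empty set, so \lemref{sprmod:unqqq} applies cleanly and $T^* \ne \emptyset$, i.e.\ $S^* \neq V$. The ``main obstacle'' is really just this bookkeeping --- making sure the exclusion $S = V$ (equivalently $T = \emptyset$) is handled and that $g$ inherits all the structural properties needed for \lemref{sprmod:unqqq} --- after which the reduction is purely mechanical.
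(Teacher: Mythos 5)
Your proof is correct, but it follows a genuinely different route from the paper's. The paper proves the lemma directly by an uncrossing argument: it takes two minimal minimizers $S_1, S_2$, uses the (immediate) supermodularity of $g(S) = f(V) - f(S)$ together with the optimality of the minimum value $\lambda$ to show that $S_1 \cap S_2$ also attains the minimum ratio, and then concludes $S_1 = S_1 \cap S_2 = S_2$ from minimality. This is the mirror image of the union argument in the proof of \lemref{sprmod:unqqq}, re-derived with intersections rather than cited. Your proof instead performs the change of variables $T = V\setminus S$, passes to the complementary supermodular function $g(X) = f(V) - f(V\setminus X)$ supplied by \lemref{submodularnegissuper}, and invokes \lemref{sprmod:unqqq} as a black box; the inclusion-reversing bijection then converts the unique maximal maximizer of $g(T)/|T|$ into the unique minimal minimizer of the original ratio. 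Your reduction is more modular and makes the duality between the deletion-based and contraction-based decompositions explicit (the paper establishes that duality separately in \thmref{decompositions-same}), whereas the paper's direct argument avoids the bookkeeping you correctly flag in Step 4 around the degenerate cases $S = V$ and $g(T) = 0$ and the passage between ``minimize $|T|/g(T)$'' and ``maximize $g(T)/|T|$''. Both routes are sound, and your handling of the hypothesis $f(v)>0$ (to guarantee $g(V)=f(V)>0$, hence a strictly positive maximum on the $T$-side) is the right place for it to enter.
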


Consider the following variant of a decomposition of $f$. We let $S_0
= V$ and find $S_1$ as the unique \textit{minimal} set $S\subseteq V$
that minimizes $\frac{|V|-|S|}{f(V)-f(S)}$. Then we ``delete''
$\hat{S_1} = V\setminus S_1$, and find the minimal set $S_2\subseteq
S_1 $ that minimizes $\frac{|S_1|-|S|}{f(S_1)-f(S)}$. In iteration
$i$, we find the unique minimal set $S_i \subset S_{i-1}$ that
minimizes $\frac{|S_{i-1}|-|S_i|}{f(S_{i-1}) - f(S_i)}$. Notice that
$S_k \subset S_{k-1} \subset ... \subset S_1 \subset V$. We say the
relative density of $\hat{S_i} = S_{i-1}\setminus S_i$ is $\lambda_i
= \frac{|S_{i-1}|-|S_i|}{f(S_{i-1}) - f(S_i)}$.  For $u\in \hat{S}_i$,
we say the density of $u$ is $\lambda_u = \lambda_i$. Hence the
dense decomposition of $f$ is $\hat{S}_1, ..., \hat{S}_k$ with
densities $\lambda_1, \ldots, \lambda_k$.
We refer to this decomposition as the first variant which is based on
deletions. 

We now describe a second dense decomposition for submodular functions.
Let $f:2^V \rightarrow \mathbb{R}_+$ be a monotone submodular function.
Consider the supermodular function $g:2^V \rightarrow \mathbb{R}_+$
where $g(X) = f(V) - f(V \setminus X)$ for all $X \subseteq V$.
From \lemref{submodularnegissuper}, $g$ is monotone supermodular.
We can then apply Corollary~\ref{unique-decomp} to obtain a dense decomposition
of $g$. Let $T_1, T_2, \ldots, T_{k'}$ be the unique decomposition
obtained by considering $g$ and let $\hat{\lambda}_1, ...,
\hat{\lambda}_{k'}$ be the corresponding densities. Note that this
second decomposition is based on contractions.

Not too surprisingly, the two decompositions coincide, as we show in
the next theorem. The main reason to consider them separately is for
technical ease in applications where one or the other view is more
natural.

\begin{theorem}{[Proof of \apdxref{pf:decompositions-same}]}
\thmlab{decompositions-same}
Let $\hat{S}_1, ..., \hat{S}_k$ be a dense decomposition (using deletion variant) of a submodular function $f$ with densities $\lambda_i, \ldots, \lambda_k$. Let $T_1, ..., T_{k'}$ be a dense decomposition (using contraction variant) of the same function with densities $\hat{\lambda}_1, ..., \hat{\lambda}_{k'}$. 
We have (i) $k'=k$, (ii) $\hat{S}_1, ...\hat{S}_k$ is exactly $T_1, ..., T_k$, and (iii) $\hat{\lambda}_i = \frac{1}{\lambda_{i}}$ for $1 \le i \le k$.
\end{theorem}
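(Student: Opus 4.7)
The plan is to exhibit an explicit bijection between the two decomposition processes by setting $T = V \setminus S$ at each level, and to verify that the unique-maximal/unique-minimal property carried by each variant is preserved under this correspondence. Throughout, let $g(X) = f(V) - f(V \setminus X)$; by \lemref{submodularnegissuper}, $g$ is a normalized monotone supermodular function on $V$, so Corollary \ref{unique-decomp} applies to $g$ and produces the contraction-based decomposition $T_1, \ldots, T_{k'}$ with densities $\hat{\lambda}_1, \ldots, \hat{\lambda}_{k'}$.

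First I would handle the base case. For any $\emptyset \neq T \subseteq V$, writing $S = V \setminus T$, I observe
\[
\frac{g(T)}{|T|} \;=\; \frac{f(V) - f(V \setminus T)}{|T|} \;=\; \frac{f(V) - f(S)}{|V| - |S|},
\]
i.e.\ the ratio maximized by the contraction variant at $T$ equals the reciprocal of the ratio minimized by the deletion variant at $S$. Since $T \mapsto V \setminus T$ is an inclusion-reversing bijection on proper subsets, maximal sets of one problem correspond to minimal sets of the other. Combined with \lemref{sprmod:unqqq} (applied to $g$) and \lemref{minimality:submodular} (applied to $f$), this forces $T_1 = V \setminus S_1 = \hat{S}_1$ and $\hat{\lambda}_1 = 1/\lambda_1$.

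Next I would handle the inductive step. Suppose we have established $T_j = \hat{S}_j$ and $\hat{\lambda}_j = 1/\lambda_j$ for $j \le i-1$, and let $A_{i-1} = T_1 \cup \cdots \cup T_{i-1} = V \setminus S_{i-1}$. The contraction variant at step $i$ operates on the supermodular function $g_{A_{i-1}}(X) = g(A_{i-1} \cup X) - g(A_{i-1})$ on ground set $V \setminus A_{i-1} = S_{i-1}$. A direct computation gives
\[
g_{A_{i-1}}(X) \;=\; [f(V) - f(S_{i-1} \setminus X)] - [f(V) - f(S_{i-1})] \;=\; f(S_{i-1}) - f(S_{i-1} \setminus X),
\]
so for $X \subseteq S_{i-1}$, setting $S = S_{i-1} \setminus X$ yields
\[
\frac{g_{A_{i-1}}(X)}{|X|} \;=\; \frac{f(S_{i-1}) - f(S)}{|S_{i-1}| - |S|},
\]
again the reciprocal of the deletion variant's ratio at level $i$. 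The same inclusion-reversing bijection argument, together with the uniqueness lemmas applied to $g_{A_{i-1}}$ and to $f$ restricted to $2^{S_{i-1}}$, gives $T_i = S_{i-1} \setminus S_i = \hat{S}_i$ and $\hat{\lambda}_i = 1/\lambda_i$. The process terminates in both cases precisely when the residual ground set is empty, yielding $k' = k$.

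The main obstacle is a conceptual/notational one: keeping track of the inclusion-reversal at every level and confirming that "unique maximal $T$ maximizing $g_{\text{res}}(T)/|T|$" in the contracted supermodular world genuinely matches "unique minimal $S$ minimizing $(|S_{i-1}|-|S|)/(f(S_{i-1})-f(S))$" in the deleted submodular world. Once one establishes that $X \mapsto S_{i-1} \setminus X$ is the right bijection and that the two uniqueness lemmas are exact duals under it, the rest of the proof is a straightforward induction and direct verification of the reciprocal density identities.
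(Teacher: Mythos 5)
Your proof is correct and follows essentially the same inductive strategy as the paper's, matching the two decompositions level by level via the reciprocal density identity. Your version is slightly cleaner in that it packages the correspondence as an explicit inclusion-reversing bijection $X \mapsto S_{i-1}\setminus X$ together with the computation $g_{A_{i-1}}(X)=f(S_{i-1})-f(S_{i-1}\setminus X)$ and then invokes the two uniqueness lemmas directly, whereas the paper instead derives a pair of opposing inequalities from the optimality of $S_i$ and of $T_i$ and concludes equality from those.
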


\subsection{Fujishige's results on lexicographically optimal bases}
Fujishige \cite{fujishige} gave a polyhedral view of the dense
decomposition which is the central ingredient in our work. He stated
his theorem for polymatroids, however, it can be easily generalized to
contrapolymatroids. We restrict attention to the unweighted case for
notational ease --- \cite{fujishige} treats the weighted case.

Vectors in $\mathbb{R}^n$ can be totally ordered by sorting the
coordinates in increasing order and considering the lexicographical ordering of the
two sorted sequences of length $n$. In the following, for $a, b \in \mathbb{R}^n$ we use
$a \prec b$ and $a \preceq b$ to refer to this order. We say
that a vector $x$ in a set $D$ is lexicographically minimum (maximum)
if for all $y \in D$ we have $x \preceq y$ ($y \preceq x$).

Fujishige proved the following theorem for polymatroids.
\begin{theorem}[\cite{fujishige}]
\label{fujishige-quad}
  Let $f:2^V \rightarrow \mathbb{R}_+$ be a monotone submodular
  function (a polymatroid) and let $B_f$ be its base polytope. Then 
  there is a unique lexicographically maximum base $b^* \in B_f$ and
  for each $v \in V$, $b^*_v = \lambda_v$. Moroever, $b^*$ is the
 optimum solution to the quadratic program: $\min \sum_{v} x_v^2
 \text{~subject to}~ x \in B_f$. 
\end{theorem}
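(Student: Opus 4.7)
The plan is to pin down the QP optimum $b^*$ via a local exchange, recognize the level sets of $b^*$ as saturating tight sets, and then read off both the identity $b^*_v = \lambda_v$ and lex-maximality from that structure. Existence and uniqueness of $b^*$ are immediate: $\sum_v x_v^2$ is strictly convex and $B_f$ is non-empty, compact, and convex, so the QP has a unique minimizer.

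The local step is the standard exchange argument. If $b^*_u > b^*_v$, the perturbation $b^* + \varepsilon(e_v - e_u)$ preserves $x(V)=f(V)$ and changes the objective to first order by $2\varepsilon(b^*_v - b^*_u) < 0$ for small $\varepsilon > 0$; by optimality of $b^*$ this perturbation must be infeasible for every such $\varepsilon$, which forces the existence of a tight set $S$ (i.e.\ $b^*(S)=f(S)$) with $v \in S$ and $u \notin S$. By submodularity of $f$ together with $b^*(A)+b^*(B) = b^*(A\cup B)+b^*(A\cap B)$, the family of tight sets is closed under both union and intersection.

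Now let $\alpha_1 > \alpha_2 > \cdots > \alpha_h$ be the distinct values of $b^*$, and set $U_j = \{w : b^*_w \ge \alpha_j\}$ and $T_j = U_j \setminus U_{j-1}$. For every $u' \in U_j$ and $v' \in V \setminus U_j$, the exchange produces a tight $S^{u',v'}$ with $v' \in S^{u',v'}$ and $u' \notin S^{u',v'}$. Intersecting over $u' \in U_j$ for fixed $v'$ yields a tight subset of $V \setminus U_j$ containing $v'$, and then unioning these over $v' \in V \setminus U_j$ realizes $V \setminus U_j$ itself as a tight set. Hence $b^*(V \setminus U_j) = f(V \setminus U_j)$ for every $j$, and
\[
\alpha_j\,|T_j| \;=\; b^*(T_j) \;=\; f(V \setminus U_{j-1}) - f(V \setminus U_j) \;=\; g(U_j) - g(U_{j-1}),
\]
where $g(X) = f(V) - f(V \setminus X)$ is the monotone supermodular function from \lemref{submodularnegissuper}. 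These are exactly the relative densities in the contraction-variant dense decomposition of $g$, which matches the decomposition of $f$ by \thmref{decompositions-same}; inducting on the blocks, anchored by \lemref{sprmod:unqqq} to identify $T_1 = U_1$ as the unique maximal densest set of $g$, yields $b^*_v = \lambda_v$ for every $v \in V$. Lex-maximality is an immediate corollary: for any $x \in B_f$, the identities $x(V \setminus U_j) \le f(V \setminus U_j) = b^*(V \setminus U_j)$ give a prefix-sum comparison of sorted-ascending sequences at every level boundary, which suffices to lex-dominate $x$-sorted by $b^*$-sorted, and uniqueness of the QP minimizer then forces uniqueness of the lex-max base.

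The main technical obstacle is the tight-set step in paragraph three: one must carefully direct the exchange inequality and combine intersection (to exclude all of $U_j$) with union (to cover all of $V \setminus U_j$) to recognize each $V \setminus U_j$ as a tight set. Everything else---strict convexity, the density arithmetic, identification with the dense decomposition, and the lex-max deduction---is routine once the level sets saturate $f$.
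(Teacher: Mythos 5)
The paper cites this theorem to Fujishige \cite{fujishige} and does not include its own proof, so there is no in-paper argument to compare against. Your proof follows the standard route to Fujishige's result and is correct in all its main steps: strict convexity gives existence and uniqueness of the QP optimum; the exchange/perturbation argument yields, whenever $b^*_u > b^*_v$, a tight set $S$ with $v \in S$, $u \notin S$ (the non-negativity constraint is harmless since $b^*_u > b^*_v \geq 0$ keeps $b^*_u - \varepsilon > 0$ for small $\varepsilon$); the lattice closure of tight sets follows from submodularity of $f$ together with the modular identity for $b^*$; combining intersection (over $u' \in U_j$) then union (over $v' \in V\setminus U_j$) correctly exhibits each $V\setminus U_j$ as tight; and the prefix-sum comparison at level boundaries does imply lex-dominance (if $x$ sorted ascending first exceeds $b^*$ sorted ascending at position $k$ inside block $T_j$, then summing up to the boundary $|V \setminus U_{j-1}|$ gives a strict prefix-sum violation there).

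Two places deserve more detail if this were written out in full. First, the inductive identification with the contraction-variant decomposition of $g$: the anchor $T_1 = U_1$ follows from $g(S) \le b^*(S) \le \alpha_1 |S|$ for all $S$ (the first inequality because $b^*(V\setminus S) \le f(V\setminus S)$), with equality at $S = U_1$; the inductive step should be stated explicitly — one shows $g_{U_{j-1}}(X) = g(X \cup U_{j-1}) - g(U_{j-1}) \le b^*(X) \le \alpha_j |X|$ for $X \subseteq V\setminus U_{j-1}$, using tightness of $V\setminus U_{j-1}$, with equality at $X = T_j$ by tightness of $V\setminus U_j$, and maximality of $T_j$ because $b^*(X) = \alpha_j|X|$ forces $X \subseteq T_j$. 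Second, ``uniqueness of the QP minimizer forces uniqueness of the lex-max base'' needs one more sentence: any lex-max base $x$ has the same sorted coordinate multiset as $b^*$, hence $\sum_v x_v^2 = \sum_v (b^*_v)^2$, so $x$ is also a QP minimizer and therefore $x = b^*$. With these filled in, the proof is complete.
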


The preceding theorem can be generalized to contrapolymatroids in
a straight forward fashion and this was explicitly pointed out in
\cite{farouk-neurips}. We paraphrase it to be similar to the preceding
theorem statement.

\begin{theorem}
   Let $f:2^V \rightarrow \mathbb{R}_+$ be a monotone supermodular
  function (a contrapolymatroid) and let $B_f$ be its base polytope. Then 
  there is a unique lexicographically minimum base $b^* \in B_f$ and
  for each $v \in V$, $b^*_v = \lambda_v$. Moreover, $b^*$ is the
 optimum solution to the quadratic program: $\min \sum_{v} x_v^2
 \text{~subject to}~ x \in B_f$.
\end{theorem}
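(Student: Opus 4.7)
My plan is to reduce to the polymatroid version of Fujishige's theorem (\thmref{fujishige-quad}) by passing through the submodular conjugate. Define $g: 2^V \to \mathbb{R}_+$ by $g(X) = f(V) - f(V\setminus X)$; by \lemref{submodularnegissuper}, $g$ is a normalized, monotone, nonnegative submodular function, so it defines a polymatroid with base polytope $B_g$. The crucial structural observation is that $B_f = B_g$ literally as subsets of $\mathbb{R}^V$. Indeed, for any $x \ge 0$ with $x(V) = f(V) = g(V)$ and any $S \subseteq V$, the contrapolymatroid inequality $x(S) \ge f(S)$ is equivalent to $x(V\setminus S) = x(V) - x(S) \le f(V) - f(S) = g(V\setminus S)$, which is exactly the polymatroid inequality for $g$ at $V\setminus S$. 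Since $S \mapsto V \setminus S$ is a bijection on $2^V$, the two systems of inequalities coincide and $B_f = B_g$.

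Having identified $B_f$ with $B_g$, the next step is to match the dense decompositions. I would argue that the supermodular contraction decomposition of $f$ (from \cref{unique-decomp}) produces the same partition and density values $\lambda_v$ as the submodular decomposition of $g$. This follows from \thmref{decompositions-same}: applied to $g$ as the submodular function, it compares $g$'s deletion decomposition with the contraction decomposition of $g$'s supermodular conjugate $X \mapsto g(V) - g(V\setminus X)$. A direct calculation shows that this conjugate is exactly $f$, since the map $h \mapsto h(V) - h(V \setminus \cdot)$ is an involution on nonnegative normalized set functions. Consequently the supermodular contraction decomposition of $f$ and the submodular decomposition of $g$ partition $V$ identically and assign the same $\lambda_v$ to every vertex.

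With these two ingredients in place, the argument finishes by invoking \thmref{fujishige-quad} on $g$. That theorem produces a unique base $b^* \in B_g$ that simultaneously minimizes $\sum_v x_v^2$ over $B_g$, is lex-optimal in $B_g$, and satisfies $b^*_v = \lambda_v$ for every $v$, where $\lambda_v$ is the polymatroid density of $v$ for $g$. Because $B_f = B_g$ as sets, the same $b^*$ lies in $B_f$ and is the unique minimizer of $\sum_v x_v^2$ over $B_f$, and by the previous paragraph $b^*_v$ equals the contrapolymatroid density $\lambda_v$ of $v$ for $f$. The lex-optimality statement for $B_f$ then transfers because the lex ordering is attached to the polytope itself and the unique $\ell_2$-minimizer of a shared base polytope is the unique "equalizing" base, which is the lex-extremal base in the appropriate direction.

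The only genuinely nontrivial step is the density correspondence, and so the main obstacle is bookkeeping: verifying that \thmref{decompositions-same} applies in the orientation I need — with $g$ playing the submodular role and $f$ playing the supermodular role via involution — so that the dense decomposition of the polymatroid $g$ yields exactly the densities promised for the contrapolymatroid $f$. Once that is in place, the equality $B_f = B_g$ reduces everything else to a direct appeal to Fujishige's polymatroid theorem.
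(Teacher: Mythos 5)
Your reduction to the polymatroid theorem via the conjugate $g(X) = f(V) - f(V\setminus X)$ is the right idea, and the identity $B_f = B_g$ is correct and cleanly argued (complement the constraint set). The paper itself does not give a proof of this statement (it merely cites \cite{farouk-neurips}), so there is no direct comparison to make; but as a stand-alone argument yours has a genuine gap in the last step, and a looseness in the middle step that is worth flagging.

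The gap is the lex-optimality transfer. You assert that the $\ell_2$-minimizer of the common polytope is ``the lex-extremal base in the appropriate direction,'' but this is exactly the thing that needs an argument. Under the paper's stated convention (sort coordinates increasing, then compare lexicographically), Theorem~\ref{fujishige-quad} identifies $b^*$ as the lex \emph{maximum} of $B_g$. Since $B_f = B_g$ as sets, the lex maximum and lex minimum of $B_f$ are defined over the very same polytope, and they are generally different points. Concretely, take $f$ supermodular on $V=\{1,2\}$ with $f(\{1\})=f(\{2\})=1$, $f(V)=4$: then $B_f$ is the segment from $(1,3)$ to $(3,1)$, the $\ell_2$-minimizer is $(2,2)$, but the lex minimum under the sort-increasing convention is $(1,3)$ (or $(3,1)$) and is not even unique. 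So either the theorem statement implicitly switches to the dual convention for contrapolymatroids (sort decreasing, then lex minimize -- the ``min--max fair'' point), or one must argue separately that the $\ell_2$-minimizer of a base polytope is simultaneously the max--min fair point and the min--max fair point. Your proof needs to pick one of these routes and actually carry it out; ``attached to the polytope itself'' does not resolve the direction of the ordering. Separately, when you invoke \thmref{decompositions-same} you write that the two decompositions ``assign the same $\lambda_v$,'' but part (iii) of that theorem says the densities are \emph{reciprocals} ($\hat\lambda_i = 1/\lambda_i$); the coincidence of the coordinates of $b^*$ with the supermodular densities of $f$ works out only because the $\lambda_v$ in Theorem~\ref{fujishige-quad} is the contraction-variant density (i.e.\ $\frac{g(S_{i-1})-g(S_i)}{|S_{i-1}|-|S_i|}$), which is the reciprocal of the paper's deletion-variant $\lambda_i$. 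You should make that bookkeeping explicit rather than asserting equality of the $\lambda$'s outright, since as written the claim contradicts the cited theorem.
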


\subsection{Approximating a lexicographically optimal base using Frank-Wolfe}
\label{fw-lex-opt-base-sec}
Consider the convex quadratic program
$\min \sum_{v \in V} x_v^2 \text{~subject to~} x \in B_f$ where $B_f$
is the base polytope of $f$ (could be submodular of supermodular).  We
can use the Frank-Wolfe method to approximately solve this
optimization problem.  The gradient of the quadratic function is
$2x$ and it follows that in each iteration, we need to answer the
linear minimization oracle of
$\text{LMO}(w)=\argmin_{{\bf s} \in B_f}\langle {\bf s} , 2{\bf w}
\rangle $ for ${\bf w} \in B_f$. This is equivalent to
$\argmin_{{\bf s} \in B_f}\langle {\bf s} , {\bf w} \rangle$, in other
words optimizing a linear objective over the base polytope.
Edmonds \cite{edmonds} showed that the simple greedy algorithm
is an $O(|V|\log |V|)$ time exact algorithm (assuming $O(1)$ time oracle access to $f$).
\begin{theorem}{\cite{edmonds}}
    Fix a polymatroid $f:2^V\rightarrow \Re_+$. Given ${\bf
      w}\in B_f$, sort $V=\{v_1, ..., v_n\}$ in descending order of
    ${\bf w}_i$ into $\{s_1, ..., s_n\}$. Let $A_i=\{s_1, ...,
    s_{i}\}$ for $1\leq i\leq n$ with $A_0=\emptyset$. Define ${\bf
      s}^\ast_i = f(A_i)-f(A_{i-1})$. Then ${\bf s}^\ast =
    \argmin_{{\bf s} \in B_f}\langle {\bf s}, {\bf w} \rangle $.
\end{theorem}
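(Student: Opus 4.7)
The plan is to verify optimality of $s^*$ by constructing a matching LP dual solution, the classical route to Edmonds' greedy theorem. The base polytope $B_f$ is cut out by the exponentially many constraints $s(S) \le f(S)$, but at $s^*$ essentially only the $n$ chain constraints $s(A_i) \le f(A_i)$ are tight, so complementary slackness forces the certifying dual solution to be supported on the chain $\emptyset = A_0 \subset A_1 \subset \cdots \subset A_n = V$.

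First I would check primal feasibility of $s^*$. Non-negativity $s^*_i = f(A_i) - f(A_{i-1}) \ge 0$ follows from monotonicity of $f$, and the equality $s^*(V) = f(V)$ is a one-line telescoping. For an arbitrary $S \subseteq V$, I would enumerate $S = \{s_{i_1}, \ldots, s_{i_k}\}$ in the order induced by the sort; then submodularity gives $f(A_{i_j}) - f(A_{i_j - 1}) \le f(s_{i_j} \mid \{s_{i_1}, \ldots, s_{i_{j-1}}\})$ for each $j$, and summing over $j$ and telescoping yields $s^*(S) \le f(S)$.

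Next I would construct the dual. Assigning $y_S$ to each inequality $s(S) \le f(S)$ and a free variable $\mu$ to the equality $s(V) = f(V)$, the stationarity conditions read $w_{v_j} = \mu + \sum_{i\,:\,v_j \in A_i} y_{A_i}$ for every $v_j$. Setting $y_S = 0$ off the chain and solving this upper-triangular system in the sort order forces $y_{A_i}$ to equal the consecutive gap between $w_{s_i}$ and $w_{s_{i+1}}$, with $\mu$ pinned down by the last coordinate. A short telescoping then shows the dual objective $\mu f(V) + \sum_i y_{A_i} f(A_i)$ collapses to $\langle w, s^* \rangle$, so weak duality will close the argument once $y \ge 0$ is established.

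The main obstacle, and the point I want to flag, is that dual feasibility $y_{A_i} = w_{s_i} - w_{s_{i+1}} \ge 0$ requires the sort of $w$ along the chain to match the \emph{direction} of optimization. For maximization with a descending sort ($w_{s_1} \ge \cdots \ge w_{s_n}$) the gaps are non-negative; for the minimization $\argmin_{s \in B_f} \langle s, w \rangle$ stated in the theorem, dual feasibility holds precisely when the greedy chain is read off in the opposite order, so that $w_{s_1} \le w_{s_2} \le \cdots \le w_{s_n}$ along $A_0 \subset \cdots \subset A_n$. This sign step is the only non-cosmetic subtlety; once the sort direction is aligned with the optimization direction, primal feasibility, dual feasibility, and matching objectives certify $s^*$ as the optimizer via weak duality and extremality of $B_f$ vertices.
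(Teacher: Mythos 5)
The paper does not prove this theorem at all---it is quoted from Edmonds with a citation---so there is no in-paper argument to compare against. Your LP-duality plan is the classical and correct route: primal feasibility of $s^*$ via monotonicity, telescoping, and the diminishing-returns inequality $f(s_{i_j}\mid A_{i_j-1})\le f(s_{i_j}\mid\{s_{i_1},\dots,s_{i_{j-1}}\})$; a dual certificate supported on the tight chain $A_0\subset\cdots\subset A_n$ with $y_{A_i}$ equal to consecutive weight gaps; and complementary slackness to match objectives. All of these steps go through.

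More importantly, the ``sign step'' you flag is not a cosmetic subtlety but an actual inconsistency in the statement as printed: with a \emph{descending} sort $w_{s_1}\ge\cdots\ge w_{s_n}$, the greedy vector $s^*$ with $s^*_{s_i}=f(A_i)-f(A_{i-1})$ is the \emph{maximizer} of $\langle s,w\rangle$ over $B_f$, not the minimizer (e.g.\ two parallel elements with $f(\{a\})=f(\{b\})=f(\{a,b\})=1$ and $w=(1,0)$ give $s^*=(1,0)$ with value $1$, while the minimum over the segment from $(1,0)$ to $(0,1)$ is $0$). For $\argmin_{s\in B_f}\langle s,w\rangle$ the sort must be ascending, exactly as in the companion contrapolymatroid statement (Theorem~\ref{supermodularedmonds}) and as required by the applications here (the LMO for the quadratic program, and the minimum spanning tree computed by Kruskal in ascending weight order). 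So your proof is correct once the sort direction is aligned with the optimization direction, and your diagnosis identifies a genuine typo in the theorem statement: either ``descending'' should read ``ascending,'' or $\argmin$ should read $\argmax$.
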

The theorem also holds for supermodular functions but by reversing the order from descending to ascending order of ${\bf w}$ and complimenting the set $A_i$. 
\begin{theorem}{\cite{edmonds}}
\label{supermodularedmonds}
    Fix a contrapolymatroid $f:2^V\rightarrow \Re_+$. Given ${\bf w}\in B_f$, sort $V=\{v_1, ..., v_n\}$ in ascending order of ${\bf w}_i$ into $\{s_1, ..., s_n\}$. Let $A_i=\{s_i, ..., s_{n}\}$ for $1\leq i\leq n$ with $A_{n+1}=\emptyset$. Define ${\bf s}^\ast_i = f(A_i)-f(A_{i+1})$. Then ${\bf s}^\ast = \argmin_{{\bf s} \in B_f}\langle {\bf s}, {\bf w} \rangle $.
\end{theorem}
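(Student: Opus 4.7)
The plan is to prove this contrapolymatroid analogue of Edmonds' greedy theorem with the standard two-step template: first verify that the greedy vector ${\bf s}^*$ lies in the base contrapolymatroid $B_f$, and then certify its optimality for the linear objective $\langle {\bf s}, {\bf w}\rangle$ via LP duality. An alternative would be to reduce to the polymatroid case using the ``dualization'' $\tilde f(X) = f(V)-f(V\setminus X)$ of \lemref{submodularnegissuper} (which gives $B_f = B_{\tilde f}$ and identifies the two greedy vectors via a telescoping computation), but a direct argument is just as short, so I proceed with that.

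For feasibility, telescoping gives $s^*(V) = \sum_{i=1}^n (f(A_i)-f(A_{i+1})) = f(A_1)-f(A_{n+1}) = f(V)$, and each coordinate $s^*_{s_i} = f(A_i) - f(A_{i+1})$ is non-negative by monotonicity of $f$ and the strict nesting $A_{i+1}\subsetneq A_i$. For an arbitrary $T\subseteq V$, define the decreasing chain $B_j := T \cap A_j$, so $B_1 = T$, $B_{n+1} = \emptyset$, and $B_j$ drops by exactly one element when $s_j\in T$ and is unchanged otherwise. Telescoping along this chain yields
\[
f(T) \;=\; \sum_{j:\, s_j\in T} \bigl(f(B_j) - f(B_{j+1})\bigr) \;=\; \sum_{j:\, s_j\in T} f\bigl(s_j \mid B_{j+1}\bigr).
\]
Since $B_{j+1}\subseteq A_{j+1}$ and $f$ is supermodular (so marginals are nondecreasing in the base set), each term satisfies $f(s_j\mid B_{j+1}) \le f(s_j\mid A_{j+1}) = f(A_j) - f(A_{j+1}) = s^*_{s_j}$, and summing gives $f(T)\le s^*(T)$. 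Hence ${\bf s}^*\in B_f$.

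For optimality I would use LP duality. After writing $s(V)=f(V)$ as two inequalities and noting that non-negativity of $s$ is implied by monotonicity, the dual takes the form: maximize $\sum_{T \subsetneq V} y_T f(T) + \alpha f(V)$ over $y_T \ge 0$ for each proper subset $T\subsetneq V$ and free $\alpha \in \mathbb{R}$, subject to $\alpha + \sum_{T \ni v} y_T \le w_v$ for every $v\in V$. I would set $\alpha = w_{s_1}$, $y_{A_i} = w_{s_i} - w_{s_{i-1}}$ for $2\le i\le n$, and $y_T = 0$ for every other $T$. The ascending order of the weights makes all $y$'s non-negative, and for each $v = s_j$ the dual constraint telescopes to $\sum_{i=2}^{j}(w_{s_i}-w_{s_{i-1}}) + w_{s_1} = w_{s_j}$, so feasibility holds with equality. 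An Abel-summation rearrangement then collapses the dual objective $w_{s_1} f(A_1) + \sum_{i=2}^{n} (w_{s_i}-w_{s_{i-1}}) f(A_i)$ to $\sum_{i=1}^n w_{s_i} (f(A_i)-f(A_{i+1})) = \langle {\bf w}, {\bf s}^*\rangle$, which matches the primal value of ${\bf s}^*$. Weak duality certifies optimality.

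The only real subtlety is getting the sort direction and the nesting $V = A_1 \supsetneq A_2 \supsetneq \cdots \supsetneq A_{n+1} = \emptyset$ correct: because a contrapolymatroid reverses the defining inequalities and supermodular marginals are nondecreasing rather than nonincreasing, heavier weights must be paired with smaller marginal increments, which is the mirror image of the polymatroid setting. Once the ordering is fixed, feasibility falls out of a single application of supermodularity, and dual feasibility is a telescoping identity by construction.
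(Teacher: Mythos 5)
The paper proves nothing here---Theorem~\ref{supermodularedmonds} is stated only as a citation to Edmonds---so there is no in-paper argument to compare against. Your proof is correct, and it is the standard LP-duality proof of Edmonds' greedy theorem, properly mirrored to the contrapolymatroid side: feasibility via telescoping along the chain $B_j = T\cap A_j$ with supermodularity giving $f(s_j\mid B_{j+1})\le f(s_j\mid A_{j+1})$ because marginals of a supermodular function are nondecreasing in the base set, and optimality via the explicit dual certificate $\alpha=w_{s_1}$, $y_{A_i}=w_{s_i}-w_{s_{i-1}}\ge 0$ whose constraints and objective both collapse by telescoping. Two small remarks. First, the non-negativity $x\ge 0$ on $B_f$ that lets you write the dual constraints as $\le w_v$ is implied by $x(\{v\})\ge f(\{v\})\ge 0$, i.e.\ by non-negativity of $f$, not by monotonicity directly (for normalized monotone $f$ these coincide, but the cleaner justification is worth stating). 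Second, you never actually use the hypothesis $\mathbf{w}\in B_f$; the argument is valid for arbitrary $\mathbf{w}$, which is fine and consistent with the fact that the greedy theorem holds for arbitrary weights, but it is worth noticing that the stated hypothesis is not what makes the proof go through. An equally short alternative to LP duality, which the paper gestures at elsewhere, is the exchange argument that swaps an adjacent inversion in the peeling order; the LP route you chose has the advantage of also producing the strong-duality certificate directly.
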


Both algorithms are dominated by the sorting step and thus takes $O(|V|\log |V|)$ time. 
These simple algorithms imply that the Frank-Wolfe algorithm can be used on the 
quadratic program to obtain an approximation to the lexicographically maximum (respectively
minimum) bases of submodular (respectively supermodular)
functions. The standard Frank-Wolfe algorithm would need
$O(\frac{\text{diam}(B_f)^2}{\epsilon^2})$ iterations to converge to a
vector $\hat{b}$ satisfying $\normX{\hat{b}-b^\ast}\leq \epsilon$.

\section{Application 1: Convergence of \textsc{Greedy++} and
  \textsc{Super-Greedy++}}
We begin by describing $\gplus$ from \cite{flowless} and its
generlization $\sgplus$ \cite{chandra-soda}. $\gplus$ is built upon
a modification
of the peeling idea of $\greedy$, and applies it over several iterations. The algorithm
initializes a weight/load on each $v \in V$, denoted by $w(v)$, to
$0$. In each iteration it creates an ordering by peeling the vertices: the next vertex to
be chosen is  $\argmin_v (w(v) + \text{deg}_{G'}(v))$ where $G'$
is the current graph (after removing the previously peeled
vertices). At the end of the iteration, $w(v)$ is increased by the
degree of $v$ when it was peeled in the current iteration. A
precise description can be found below. $\sgreedy$ is a natural generalization of $\greedy$ to supermodular functions, and $\sgplus$ generalizes $\gplus$.  A formal description of the algorithm is given below.

\noindent
\begin{minipage}[t]{0.46\textwidth}
\begin{algorithm}[H]
  \begin{algorithmic}
    \State Initialize $w(u)\leftarrow 0$ for all $u\in V$
    \State $G^\ast \leftarrow G$
    \For {$k\leftarrow 0$ to $T-1$}
        \State $G' \leftarrow G$
        \While {$|G'|>1$}
            \State $u \leftarrow \argmin\limits_{u\in G'}(w(u) + deg_{G'}(u))$
            \State $w(u) \leftarrow w(u) + deg_{G'}(u)$
            \State $G' \leftarrow G' - \{u\}$
            \If{ $\lambda(G')>\lambda(G^\ast)$}
                \State $G^\ast \leftarrow G'$
            \EndIf{}
        \EndWhile{}
    \EndFor{}
    \State 
  \Return $G^\ast$
 \end{algorithmic}
  \caption{\textsc{Greedy++($G(V,E), T$)} \cite{flowless}}\label{GreedyppOriginal}
 \end{algorithm}
\end{minipage}
\hfill
\begin{minipage}[t]{0.51\textwidth}
\begin{algorithm}[H]
  \begin{algorithmic}
    \State Initialize $w(u)\leftarrow 0$ for all $u\in V$
    \State $S^\ast \leftarrow V$
    \For {$k\leftarrow 0$ to $T-1$}
        \State $V' \leftarrow V$
        \While {$|V'|>1$}
            \State $u \leftarrow \argmin\limits_{u\in V'}\{ w(u) + f(V')-f(V'-u) \}$
            \State $w(u) \leftarrow w(u) + f(V')-f(V'-u) $
            \State $V' \leftarrow V' - u$
            \If{ $\frac{f(V')}{|V'|}>\frac{f(S^\ast)}{|S^\ast|}$}
                \State $S^\ast \leftarrow V'$
            \EndIf{}
        \EndWhile{}
    \EndFor{}
    \State 
  \Return $S^\ast$
 \end{algorithmic}
  \caption{\textsc{Super-Greedy++( $f, T$)} \cite{chandra-soda}}\label{SuperGreedypp}
 \end{algorithm}
\end{minipage}

\medskip

The goal of this section is to prove
Theorem~\ref{thm:intro-greedyplusplus} on the convergence of $\sgplus$
and $\gplus$ to the lexicographically maximal base. 

\subsection{Intuition and main technical lemmas}
As we saw in
Section~\ref{fw-lex-opt-base-sec}, if one applies the Frank-Wolfe algorithm to solve the
qaudratic program $\min \sum_{v \in V} x_v^2 \text{~subject to~} x \in
B_f$, each iteration corresponds to finding a minimum weight base of
$f$ where the weights are given by the current vector $x$. Finding a
minimum weight base corresponds to sorting $V$ by $x$. However,
$\sgplus$ and $\gplus$ use a more involved peeling algorithm in each
iteration; the peeling is based on the weights as well as the degrees
of the vertices and it is not a static ordering (the degrees change as
peeling proceeds). This is what makes it non-trivial to formally analyze
these algorithms. In \cite{chandra-soda}, the authors used a
connection to the multiplicative weight update method via LP
relaxations. Here we rely on the quadratic program and noisy
Frank-Wolfe. The high-level intuition, that originates in \cite{chandra-soda}, is the following. As the
algorithm proceeds in iterations, the weights on the vertices
accumulate; recall that the total increase in the 
weight in the case of \dsg is $m = |E|$. The degree term, which influences the
peeling, is dominant in early iterations, but its influence on the ordering of the
vertices decreases eventually as the weights of the vertices get larger. 
It is then plausible to conjecture
that the algorithm behaves like the standard Frank-Wolfe method in the
limit. The main question is how to make this intuition precise. \cite{chandra-soda} relies on a connection to the MWU method while we use a connection to noisy Frank-Wolfe.

For this purpose, consider an iteration of $\gplus$ and
$\sgplus$. The algorithm peels based on the
current weight vector and the degrees. We isolate and abstract
this peeling algorithm and refer to it as Weighted-Greedy and
Weighted-SuperGreedy respectively, and formally describe them with
the weight vector $w$ as a parameter. 

\noindent
\begin{minipage}[t]{0.49\textwidth}
\begin{algorithm}[H]
Input: $G(V,E)$ and $w(u)$ for $u\in V$ 
  \begin{algorithmic}
    \State $G' \leftarrow G$
    \State Initialize $\hat{d}(u)=0$ for all $u\in V$.
    \While {$|G'|>1$}
         \State $u \leftarrow \argmin_{u\in G'}(w(u) + deg_{G'}(u))$
        \State $\hat{d}(u) \leftarrow deg_{G'}(u)$
        \State $G' \leftarrow G' - \{u\}$
    \EndWhile{}
  \Return $\hat{d}$
 \end{algorithmic}
  \caption{\textsc{Weighted-Greedy($G$, $w$)}} \label{weighted-greedypp} \label{WGPP}
 \end{algorithm}
\end{minipage}
\hfill
\begin{minipage}[t]{0.49\textwidth}
\begin{algorithm}[H]
Input: Supermodular $f:2^V \to \Re_+$, $w(u)$ for $ u\in V$ 
  \begin{algorithmic}
    \State $V' \leftarrow V$
    \State Initialize $\hat{d}(u)=0$ for all $u\in V$.
    \While {$|V'|>1$}
        \State $u \leftarrow \argmin\limits_{u\in G'}(w(u) + f(V')-f(V'-u)$
        \State $\hat{d}(u) \leftarrow   f(V')-f(V'-u)$ 
        \State $V' \leftarrow V' - u$
    \EndWhile{}
  \Return $\hat{d}$
 \end{algorithmic}
  \caption{\footnotesize \textsc{Weighted-SuperGreedy($f$, $w$)}} \label{weighted-supergreedypp} \label{WSGPP}
 \end{algorithm}
\end{minipage}

\medskip

The peeling algorithms also compute a base $\hat{d} \in B_f$. In the case of graphs and \dsg, $\hat{d}(u)$ is set to the degree of the vertex $u$ when it is peeled. One can alternatively view the base as an orientation of the edges of $E$.  
Define for each edge $uv\in G$ two weights $x_{uv}, x_{vu}$. We say that $\bf x$ is \emph{valid} if $x_{uv}+x_{vu}=1$ and $x_{uv}, x_{vu}\geq 0$ for all $\{u,v\}\in E(G)$. For $b\in \Re^{|V|}$, we say $x$ \emph{induces} $b$ if $b_u=\sum_{v\in \delta(u)}x_{uv}$ for all $u\in V$. We say a vector $d$ is an \emph{orientation} if there is a valid $x$ that induces it. 
\begin{lemma}[\cite{farouk-neurips}] 
\lemlab{characterization-dsg}
For $f(S)=\cardin{E(S)}$, $b\in B_f$ if and only if $b$ is an orientation. 
\end{lemma}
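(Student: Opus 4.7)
The plan is to treat the two directions separately: the forward direction (orientation $\Rightarrow$ base) is a direct verification from the definitions, while the reverse direction (base $\Rightarrow$ orientation) will be reduced to a max-flow/min-cut argument in which the defining inequalities of $B_f$ play the role of Hall-type feasibility conditions.

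For the forward direction, let $x$ be a valid assignment inducing $b$. Non-negativity of $b$ is immediate from $x \ge 0$. Summing coordinates, $\sum_{u \in V} b_u = \sum_{u} \sum_{v \in \delta(u)} x_{uv} = \sum_{\{u,v\} \in E}(x_{uv}+x_{vu}) = |E| = f(V)$, so the equality constraint of $B_f$ holds. For any $S \subseteq V$, write $b(S) = \sum_{u \in S}\sum_{v \in \delta(u)} x_{uv}$ and split edges incident to $S$ into those with both endpoints in $S$ and those crossing out: the former contribute $x_{uv}+x_{vu}=1$ each, while the latter contribute a single nonnegative term. Hence $b(S) \geq |E(S)| = f(S)$, placing $b$ in the base contrapolymatroid.

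For the reverse direction, I would build a flow network with source $s$, sink $t$, arcs $s \to e$ of capacity $1$ for each $e \in E$, arcs $e \to u$ and $e \to v$ of capacity $+\infty$ for $e = \{u,v\}$, and arcs $u \to t$ of capacity $b_u$. An integral decomposition is not needed: any feasible flow of value $|E|$ saturates all source-side and all sink-side arcs, and setting $x_{uv}$ to be the flow on arc $e \to u$ for $e=\{u,v\}$ yields a valid $x$ inducing $b$. By the max-flow min-cut theorem it therefore suffices to show every $s$–$t$ cut has capacity at least $|E|$. Consider a finite-capacity cut $(A,B)$ with $s \in A$ and $t \in B$, and write $E_A$ for the edge-nodes placed in $A$ and $V_A$ for the graph vertices in $A$. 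Finiteness forces both endpoints of every $e \in E_A$ to lie in $V_A$ (otherwise an infinite arc is cut), so $E_A \subseteq E(V_A)$. The cut capacity is then
\begin{equation*}
|E \setminus E_A| + b(V_A) \;\geq\; |E| - |E(V_A)| + b(V_A) \;\geq\; |E|,
\end{equation*}
where the last inequality uses precisely $b(V_A) \ge f(V_A) = |E(V_A)|$ from $b \in B_f$. Thus the max flow equals $|E|$, producing the desired fractional orientation.

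The main obstacle is the reverse direction, where the key conceptual step is recognising that the contrapolymatroid inequalities $b(S) \geq |E(S)|$ are exactly the deficiency conditions needed to guarantee that every edge can be fractionally split between its two endpoints while meeting the target vertex loads $b_u$. Once the bipartite flow formulation is set up, the proof reduces to a short min-cut calculation and the construction of $x$ from the optimal flow is automatic.
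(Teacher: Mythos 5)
The paper states this lemma with a citation to \cite{farouk-neurips} and gives no proof, so there is no internal argument to compare against; your proof must stand on its own. It does. The forward direction is a clean bookkeeping check: $b \ge 0$ is immediate, $b(V) = |E|$ follows by pairing $x_{uv}+x_{vu}=1$ over all edges, and for each $S$ the internal edges of $S$ contribute exactly $|E(S)|$ while crossing edges contribute a sum of nonnegative $x_{uv}$ terms, giving $b(S) \ge f(S)$. The reverse direction via the $s$--$e$--$v$--$t$ flow network is exactly the natural argument, and you use the contrapolymatroid constraints at precisely the right place: finiteness of a cut forces $E_A \subseteq E(V_A)$, so the cut capacity $|E| - |E_A| + b(V_A)$ is at least $|E| - |E(V_A)| + b(V_A) \ge |E|$, where the last step is $b(V_A) \ge f(V_A)$. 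Since $b(V)=|E|$ is the total sink capacity, a max flow of value $|E|$ saturates both sides, so the edge-to-vertex flows give a valid $x$ inducing $b$. This is a standard and arguably the cleanest self-contained route; it is essentially the degree-constrained fractional orientation / Hall-type argument that underlies Charikar's LP characterization, which is the same machinery used in \cite{farouk-neurips} and \cite{frankwolfe}.

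Two minor presentational nitpicks, neither a gap: (i) when you say "a feasible flow of value $|E|$ saturates all sink-side arcs," it is worth explicitly noting that this is because $\sum_u b_u = f(V) = |E|$, so the sink-side capacities sum exactly to the flow value; (ii) you should also remark that once the source and sink sides are saturated, flow conservation at each edge node $e=\{u,v\}$ gives $x_{uv}+x_{vu}=1$ and conservation at each vertex node gives $b_u = \sum_{v\in\delta(u)} x_{uv}$, which is how the valid inducing $x$ is actually obtained.
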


Recall that the Frank-Wolfe algorithm, for a given weight vector $w: V
\rightarrow \mathbb{R}_+$, computes the minimum-weight base $b$
with respect to $w$ since $\langle w, b \rangle = \min_{y \in B_f}
\langle w, y\rangle$. It is worth taking a moment to note that this base (or orientation due to \lemref{characterization-dsg}) is easily computable: we orient each edge integrally (i.e $x_{vu}=1, x_{uv}=0$) from $v$ to $u$ if $w(u)\geq w(v)$, and the other way otherwise. A simple exchange argument yields a proof of correctness and is implicit in many works \cite{frankwolfe}. This induces an optimal base $d^\ast_w$ with respect to $w$. Our goal is to compare how the peeling order created by Weighted-Greedy (and Weighted-SuperGreedy) compares
with the best base. The following two
technical lemmas formalize the key idea. The first is tailored to \dsg and the
second applies to \dss.

\begin{lemma}{[Proof in \apdxref{pf:approx}]}
\lemlab{approx}
Let $\hat{d}$ be the output from \textsc{Weighted-Greedy}$(G, w)$ and
$d^\ast_w$ be the optimal orientation with respect to $w$. Then
$\langle w, \hat{d} \rangle \leq \langle w, d^\ast_w \rangle +
\sum_{u}deg_G(u)^2$. In particular, the
additive error \textbf{does not depend} on the weight vector $w$.
\end{lemma}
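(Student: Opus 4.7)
My plan is to rewrite both inner products as sums over edges of $G$, isolate the edges on which \textsc{Weighted-Greedy} disagrees with the optimal orientation, and use the greedy selection rule to bound the loss on each such edge. First, by \lemref{characterization-dsg}, both $\hat{d}$ and $d^*_w$ correspond to orientations of $E$. Writing $p(uv)$ for the endpoint of edge $uv$ that \textsc{Weighted-Greedy} peels first, we get
\[
  \langle w, \hat{d}\rangle \;=\; \sum_{uv \in E} w\!\bigl(p(uv)\bigr),
  \qquad
  \langle w, d^*_w\rangle \;=\; \sum_{uv \in E} \min\bigl(w(u),w(v)\bigr),
\]
the second identity because the optimal orientation orients every edge toward the endpoint of smaller weight (ties broken arbitrarily).

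Call an edge $uv \in E$ \emph{bad} if $p(uv)$ is the endpoint of \emph{strictly} larger weight. Then
\[
  \langle w, \hat{d}\rangle - \langle w, d^*_w\rangle
  \;=\; \sum_{uv\,\text{bad}} \bigl(w(p(uv)) - \min(w(u),w(v))\bigr),
\]
and every other edge contributes $0$. For a bad edge $uv$, let $u = p(uv)$ and $v$ be the other endpoint, so $w(u) > w(v)$. At the moment $u$ is peeled in some working graph $G' \subseteq G$, the greedy rule selected $u$ over $v$, i.e.\ $w(u) + \deg_{G'}(u) \le w(v) + \deg_{G'}(v)$. Rearranging and using $\deg_{G'}(v) \le \deg_G(v)$ yields
\[
  w(u) - w(v) \;\le\; \deg_{G'}(v) - \deg_{G'}(u) \;\le\; \deg_G(v).
\]

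Finally, for a fixed vertex $v$, the number of bad edges for which $v$ is the smaller-weight endpoint is at most $\deg_G(v)$, since every such edge is incident to $v$. Summing the bound $w(u)-w(v) \le \deg_G(v)$ over the bad edges and grouping by the smaller-weight endpoint,
\[
  \langle w, \hat{d}\rangle - \langle w, d^*_w\rangle
  \;\le\; \sum_{v \in V} \deg_G(v)\cdot \deg_G(v)
  \;=\; \sum_{v \in V} \deg_G(v)^2,
\]
which is exactly the claimed inequality.

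\textbf{Main obstacle.} The argument itself is short; the only subtlety is correctly identifying that the loss is paid per \emph{edge} (not per peeling event) and that the ``extra'' weight $w(u) - w(v)$ on a bad edge is controlled by the \emph{original} graph degree of the smaller-weight endpoint, not by the current degree at peeling time, so that the bound is independent of $w$. The crucial observation making the bound additive and $w$-free is the inequality $w(u) - w(v) \le \deg_{G'}(v) \le \deg_G(v)$, which comes directly from the greedy tie between $u$ and $v$ at the peeling step.
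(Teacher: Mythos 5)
Your proof is correct and follows essentially the same approach as the paper's: rewrite both inner products as per-edge sums, identify the edges where \textsc{Weighted-Greedy} orients against the weight order, use the greedy selection inequality $w(u)+\deg_{G'}(u) \le w(v)+\deg_{G'}(v)$ at the peeling step to bound the per-edge loss by $\deg_G(v)$, and then group by the smaller-weight endpoint to get $\sum_v \deg_G(v)^2$. Your write-up is in fact a bit cleaner in isolating ``bad'' edges and explicitly grouping the sum, whereas the paper phrases the same tally informally as an induction over the peeling order.
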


\begin{lemma}{[Proof in \apdxref{pf:approxDSS}]}
\lemlab{approxDSS}
For a supermodular function $f:2^V \to \Re_+$, let $\hat{d}$ be the output from \textsc{Weighted-SuperGreedy}$(f, w)$ (Algorithm \ref{weighted-supergreedypp}) and $d^\ast_w$ be the optimal vector with respect to $w$ as described in Theorem \ref{supermodularedmonds}. Then $\langle w, \hat{d}  \rangle \leq \langle w, d^\ast_w \rangle + n\sum_{u\in V}f(u | V-u)^2$. In particular, the additive error \textbf{does not depend} on the weight vector $w$.
\end{lemma}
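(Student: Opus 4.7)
The plan is to bound $\langle w, \hat{d}\rangle - \langle w, d^*_w\rangle$ by a bubble-sort exchange argument comparing the Weighted-SuperGreedy peeling order $\pi = (u_1,\ldots,u_n)$ against the ascending-$w$ order $\sigma$ that produces $d^*_w$ via Theorem~\ref{supermodularedmonds}. The first ingredient is a single-swap computation: for any order $\tau$ with consecutive entries $\tau_i = p$, $\tau_{i+1} = q$ and with $B := V \setminus \{\tau_1,\ldots,\tau_{i-1}\}$ denoting the set remaining when $p$ is peeled, a direct expansion of the base formula shows that exchanging $p$ and $q$ changes $\langle w, d_\tau\rangle$ by exactly
\[
(w(q)-w(p))\cdot \Gamma_{p,q,B}, \qquad \text{where } \Gamma_{p,q,B} \;:=\; f(p\mid B-p) - f(p\mid B-p-q).
\]
By supermodularity $\Gamma_{p,q,B}\ge 0$, and trivially $\Gamma_{p,q,B}\le f(p\mid V-p) =: \alpha_p$.

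Next, I would transform $\pi$ into $\sigma$ by a bubble sort that repeatedly swaps adjacent inverted pairs. Two standard facts are crucial: each unordered pair $\{p,q\}$ is swapped at most once, and at the moment of its swap the pair appears in its $\pi$-order, since a swap does not affect the relative order of any other pair. Thus if the swap is $(p,q)\to(q,p)$ with $p$ at position $i$, then $\pi$ peels $p$ before $q$, and because $\sigma$ places $q$ first, $w(p) > w(q)$. Telescoping yields
\[
\langle w, \hat d\rangle - \langle w, d^*_w\rangle \;=\; \sum_{\text{swaps}} (w(p) - w(q))\,\Gamma_{p,q,B},
\]
a sum of nonnegative terms.

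The heart of the argument is to bound each summand by $\alpha_p\alpha_q$. The inequality $\Gamma_{p,q,B}\le \alpha_p$ is immediate from supermodularity (noted above). For $w(p)-w(q)$, I would invoke the Weighted-SuperGreedy selection rule at the instant $p$ was peeled in the \emph{original} order $\pi$: because $\pi$ peels $p$ before $q$, the element $q$ lies in the set $B_p^\pi$ present at that moment, and the greedy choice of $p$ over $q$ gives
\[
w(p) + f(p\mid B_p^\pi - p) \;\le\; w(q) + f(q\mid B_p^\pi - q), \qquad\text{hence}\qquad w(p) - w(q) \;\le\; f(q\mid B_p^\pi - q) \;\le\; \alpha_q.
\]
Summing over the (at most) $\binom{n}{2}$ swapped unordered pairs and applying Cauchy--Schwarz $\bigl(\sum_v \alpha_v\bigr)^2 \le n\sum_v \alpha_v^2$ yields $\sum_{\{p,q\}\text{ swapped}} \alpha_p\alpha_q \le \tfrac{n-1}{2}\sum_v \alpha_v^2 \le n\sum_v\alpha_v^2$, which is the claimed additive error, and in particular is independent of $w$.

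The main obstacle I anticipate is the subtlety that the set $B$ inside the swap's $\Gamma$ term is determined by the \emph{intermediate} bubble-sort permutation, not by the original $\pi$. Supermodularity handles this for the $\Gamma_{p,q,B}\le \alpha_p$ direction since the monotonicity of marginals dominates any intermediate choice, but the bound $w(p)-w(q)\le \alpha_q$ relies on the greedy inequality living in $\pi$ (with $B_p^\pi$), and it is the bubble-sort structure that guarantees $q\in B_p^\pi$ at the swap. One must also verify that the telescoped sum is independent of the order in which inversions are resolved, which follows because the sum equals the fixed quantity $\langle w, d_\sigma\rangle - \langle w, d_\pi\rangle$.
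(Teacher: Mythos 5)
Your proof is correct and takes essentially the same approach as the paper's: an adjacent-transposition exchange argument converting the Weighted-SuperGreedy peeling order into Edmonds' ascending-weight order, with each swap's cost change written as (weight gap)$\times$(symmetric marginal-difference term $\Gamma$), and both factors bounded via supermodularity and the greedy selection inequality at the moment $p$ was peeled in the original order. Your version is marginally sharper and cleaner than the paper's --- the paper iteratively moves the first-disagreement element to its correct position and charges each of the $\le n$ swaps for $s_j$ the bound $f(s_j\mid V-s_j)^2$, while you charge each inverted pair the cross-term $\alpha_p\alpha_q$ and finish with Cauchy--Schwarz (yielding $\tfrac{n-1}{2}\sum_v\alpha_v^2$) --- and you explicitly address the subtlety that $\Gamma_{p,q,B}$ is evaluated at intermediate bubble-sort sets $B$, a point the paper leaves implicit.
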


\subsection{Convergence proof for $\gplus$}

Why is \lemref{approx} crucial? First, observe that the minimizer
$d^\ast_w$ of $\langle w, d\rangle$ is exactly the same minimizer as
$\langle Kw, d\rangle$ for any constant $K>0$ (and vice-versa).

\begin{lemma}
\label{scaleapprox}
Let $\hat{d}_K$ be the output of \textsc{Weighted-Greedy}$(G, Kw)$. Then $\langle w, \hat{d}_K \rangle \leq \langle w, d^\ast_w \rangle + \frac{\sum_u deg_G(u)^2}{K}$.
\end{lemma}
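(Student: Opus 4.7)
The plan is to deduce Lemma~\ref{scaleapprox} from Lemma~\ref{approx} by a simple scaling argument, exploiting the observation already made just before the statement: the minimizer of $\langle w, d\rangle$ over $d \in B_f$ is invariant under positive rescaling of the weight vector, so $d^\ast_{Kw} = d^\ast_w$ for every $K > 0$.

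First, I would apply Lemma~\ref{approx} not to $w$ itself but to the rescaled weight vector $Kw$. The lemma yields
\[
\langle Kw, \hat{d}_K \rangle \;\le\; \langle Kw, d^\ast_{Kw} \rangle + \sum_{u} \deg_G(u)^2,
\]
and it is crucial here that the additive error $\sum_u \deg_G(u)^2$ in Lemma~\ref{approx} depends only on $G$ and not on the weight vector supplied to Weighted-Greedy; this is exactly the clause emphasized in that lemma's statement. Next, I would substitute $d^\ast_{Kw} = d^\ast_w$ to obtain
\[
\langle Kw, \hat{d}_K \rangle \;\le\; \langle Kw, d^\ast_w \rangle + \sum_{u} \deg_G(u)^2.
\]
Finally, dividing through by $K > 0$ and using the linearity of the inner product gives
\[
\langle w, \hat{d}_K \rangle \;\le\; \langle w, d^\ast_w \rangle + \frac{\sum_{u} \deg_G(u)^2}{K},
\]
which is exactly the stated bound.

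There is no real obstacle: the entire content of the lemma lies in the combination of the scale invariance of the optimal orientation together with the weight-independent additive error from Lemma~\ref{approx}. The value of this corollary, and the reason it is stated as a separate lemma, is the conceptual message that by feeding Weighted-Greedy an artificially inflated weight vector $Kw$ we can drive its approximation error with respect to the original weights $w$ down by a factor of $K$. This is precisely the ``confidence amplification'' mechanism that will let the accumulating weights across iterations of \gplus play the role of the improving oracle accuracy required by the noisy Frank-Wolfe analysis (\thmref{FW-Resistant}).
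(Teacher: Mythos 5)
Your proof is correct and coincides with the paper's own argument: both apply \lemref{approx} with the rescaled weight vector $Kw$, use the scale invariance of the minimizer (the paper writes this as $\min_d \langle Kw, d\rangle = K\langle w, d^\ast_w\rangle$, you write it as $d^\ast_{Kw} = d^\ast_w$), and divide by $K$.
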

\begin{proof}
By \lemref{approx}, $\sum_{u\in V} Kw(u)\hat{d}_K(u) \leq \min\limits_{\text{orientation }d}\left( \sum_{u\in V} Kw(u)d(u) \right) + \sum_u deg_G(u)^2
$. 
Dividing by $K$ implies the claim.
\end{proof}

We are now ready to view \textsc{Greedy++}
as a noisy Frank-Wolfe algorithm. Algorithm \ref{greedypp} shows how
\textsc{Greedy++} could be interpreted.

\begin{algorithm}
Input: $G=(V,E)$ and $w(u)$ for $u\in V$ 
  \begin{algorithmic}
    \State Initialize $b^{(0)}\leftarrow \textsc{Weighted-Greedy}(G, {\bf 0})$ \Comment{$b^{(0)}$ is a valid orientation}
    \For {$k\leftarrow 0$ to $T-1$} 
        \State $\gamma \leftarrow \frac{1}{k+1}$
        \State $d^{(k+1)} \leftarrow \textsc{Weighted-Greedy}(G, (k+1)b^{(k)})$
        \State $b^{(k+1)}\leftarrow (1-\gamma)b^{(k)}+\gamma d^{(k+1)}$
    \EndFor{}
  \Return $b^{(T)}$
 \end{algorithmic}
  \caption{\textsc{Greedy++($G(V,E)$)}} \label{greedypp}
\end{algorithm}

The algorithm is exactly the same as the one described in Algorithm
\ref{GreedyppOriginal}. Indeed, one can prove that $kb^{(k)}$ is
precisely the weights that \textsc{Greedy++} ends with at round $k$ by
induction. Observe that $(k+1)b^{(k+1)} = kb^{(k)} + d^{(k+1)}$ which
is precisely the load as described in Algorithm \ref{GreedyppOriginal} (via induction). We note that $\gamma\leftarrow 1/(k+1)$ is crucial here to ensure we are taking the average.  \lemref{greedyppgoodelmo} in the appendix implies that
each peel in Algorithm \ref{GreedyppOriginal} is
$\frac{\delta C_f}{k+2}$-approximate linear minimization oracle. Using
\thmref{FW-Resistant}, this implies that \textsc{Greedy++} (as
described in Algorithm \ref{GreedyppOriginal}) converges to $b^\ast$ in $\Tilde{O}(\frac{mn^2}{\epsilon^2})$ iterations since $\delta=O(\frac{\sum_u d_G(u)^2}{m})$ and $C_f = O(\sum_u d_G(u)^2)$. We use the probabilistic method to bound $C_f$ in the Appendix. 

\mypara{Extension to $\sgplus$:} An essentially similar analysis works for $\sgplus$. Instead of \lemref{approx}, we rely on \lemref{approxDSS}. For technical reasons, the convergence analysis of \sgplus is slightly weaker than for \gplus.

\section{Application 2: Greedy Tree Packing interpreted
  via Frank-Wolfe}
  \label{sec:treepacking}
Let $G=(V,E)$ be a graph with non-negative edge capacities.
The goal of this section is to view Thorup's definitions of ideal edge loads and the associated tree packing from a different perspective, and to derive an alternate convergence analysis of his greedy tree packing algorithm~\cite{Thorup07,Thorup08}.  In previous work, Chekuri, Quanrud and Xu \cite{ChekuriQX20} obtained a
different tree packing based on an LP relaxation for
$k$-cut, and used it in place of ideal tree packing. Despite this, a proper understanding of Thorup's ideas was not clear. We address this gap.

We restrict our attention to unweighted multi-graphs throughout this
section, and comment on the capacitated case at the end of the
section. Let $G=(V,E)$ be a connected multi-graph, with 
 $n$ vertices and $m$ edges. Consider the graphic matroid
$\mathcal{M}_G(E, \mathcal{F})$ induced by $G$; $E$ is the ground set,
and $\mathcal{F}$ consists of all sub-forests of $G$. The bases of the matroid are
precisely the spanning trees of $G$. Consider the rank function
$r: 2^E \rightarrow \mathbb{Z}_+$ of $\mathcal{M}_G$. $r$ is
submodular, and it is well-known that for a edge subset
$X \subseteq E$, $r(X) = n - \kappa(X)$ where $\kappa(X)$ is the
number of connected components induced by $X$.

\subsection{Thorup's recursive algorithm as dense decomposition}
For consistency with previous notation, we use $f$ to denote the submodular rank
function $r$. We first describe ideal loads as defined by Thorup.
Consider the Tutte--Nash-Williams partition $P$ for $G$. Recall that $P$
minimizes the ratio $\frac{|E(P)|}{|P|-1}$ among
all partitions, and this ratio is $\tau(G)$. For each edge
$e \in E(P)$, assign $\ell^*(e) = \frac{1}{\tau(G)}$. Remove the edges
in $E(P)$ to obtain a graph $G'$ which now consists of several
disconnected components. Recursively compute ideal loads for the edges
in each component of $G'$ (the process stops when $G$ has no edges).

We claim that Thorup's recursive decomposition coincides with the 
dense decomposition of $f$ (the first variant). To see this, it suffices
to see the first step of the dense decomposition. We find the minimal set
$S_1 \subseteq E$ that minimizes $\frac{|E|-|S|}{f(E)-f(S)}$. 
We let $\hat{S}_1=E\setminus S_1$ and assign the edges in $\hat{S}_1$ the
density $\frac{f(E)-f(S)}{|E|-|S|}$. Then, we ``delete''
$\hat{S}_1$. Observe that $\hat{S}_1=E\setminus S_1$ is just the edges crossing the
partition $P(S_1)$ defined by the $\kappa(S_1)$ connected components
spanned by $S_1$. Also, recall that
$\frac{f(E)-f(S_1)}{|E|-|S_1|}=\frac{\kappa(S_1)-1}{|E\setminus
  S_1|}=\frac{|P(S_1)|-1}{E(P(S_1))}=\frac{1}{\tau(G)}$. Hence, the
density assigned to edges in $\hat{S}_1$ is exactly
$\frac{1}{\tau(G)}$ by the Tutte--Nash-Williams theorem. The next step
is deleting $\hat{S_1}=E\setminus S_1$, which, as discussed above,
are the edges crossing the partition $P(S_1)$.

Via induction we prove the following lemma.
\begin{lemma}
\lemlab{thorup:rec:ok}{[Proof in \apdxref{pf:thorup:rec:ok}]}
    The weights given to the edges by the dense decomposition algorithm on $f$ coincide with $\ell^\ast$. 
\end{lemma}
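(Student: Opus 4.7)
The plan is induction on $|E|$. The base case $E = \emptyset$ is immediate, and the paragraph preceding the lemma already verifies the first step: the set-minimal minimizer $S_1$ of the peeling ratio $(|E|-|S|)/(f(E)-f(S))$ satisfies $\hat{S}_1 = E \setminus S_1 = E(P)$ for a Tutte--Nash-Williams partition $P$ of $G$, and each edge in $\hat{S}_1$ is assigned load $1/\tau(G)$, matching the first step of Thorup's recursion exactly.

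What remains is to align the two recursions. I would establish two auxiliary facts. First, every block of the partition $P$ selected by the set-minimal $S_1$ induces a connected subgraph of $G$: if a block $V_i$ were disconnected in $G[V_i]$, splitting $V_i$ along its components would yield a strictly finer partition with the \emph{same} number of crossing edges, contradicting either the minimality of the ratio $|E(P)|/(|P|-1) = \tau(G)$ or (if some other partition matches this ratio) the set-minimality of $S_1$. Consequently the connected components of $(V, S_1)$ are exactly the blocks of $P$, which are precisely the subgraphs Thorup recurses on. Second, $f$ restricted to subsets of $S_1$ is the rank function of $(V, S_1)$, which splits as a direct sum $f_1 \oplus \cdots \oplus f_q$ of the rank functions of those components. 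The dense decomposition of such a direct sum is invariant under the order in which components are peeled: in each subsequent iteration, the set-minimal minimizer of the peeling ratio lives entirely inside a single summand, so each edge of $C_i$ ultimately receives the density it would under the dense decomposition of $f_i$ alone.

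Combining these facts with the induction hypothesis applied to each $C_i$ (which has strictly fewer edges than $G$ whenever $\hat{S}_1 \neq \emptyset$), every edge of $\hat{S}_1$ receives load $1/\tau(G)$ under both algorithms, and every other edge receives the load Thorup assigns recursively on its component, which equals $\ell^*$. The main technical obstacle I anticipate is the direct-sum invariance in the second fact: one must argue, via a standard uncrossing argument for submodular ratio minimization, that the set-minimal minimizer of the peeling ratio on a disjoint union $f_1 \oplus f_2$ never straddles both summands, so that the sequence of densities produced by the joint dense decomposition is the order-preserving merge of the sequences produced by the two summands individually. This argument is routine but deserves an explicit statement since everything else in the proof reduces to it.
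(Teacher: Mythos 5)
Your approach is sound and, at the level of ideas, parallel to the paper's, but the two proofs are organized quite differently. The paper inducts on the \emph{iteration index} of the joint dense decomposition and proves, by a direct exchange argument, that at each step the cross-edge ratio $|E_q|/(k_q-1)$ is the same across every contributing component $C_q$ (if one were smaller, deleting only its cross edges would beat $S_i$; if one were larger, adding its cross edges back would give a strictly smaller set with the same ratio). That uniform-ratio fact is precisely what makes the global $\hat S_i$'s split correctly into the per-component TNW layers, so the paper never needs to state a direct-sum lemma at all. You instead induct on $|E|$ and factor the argument through two modular claims: (i) blocks of the selected TNW partition are connected, so $(V,S_1)$'s components are exactly Thorup's recursion subgraphs, and (ii) the dense decomposition of a direct sum $f_1\oplus\cdots\oplus f_q$ is the order-preserving merge of the components' decompositions. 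That is a legitimate and arguably cleaner way to package the same exchange reasoning; what you buy is reusability of the direct-sum lemma, at the cost of having to prove it.

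Two things to tighten. First, your intermediate phrasing of the direct-sum lemma --- ``the set-minimal minimizer \ldots never straddles both summands'' --- is false when two summands tie for the minimum peeling ratio: the set-minimal $S$ then deletes cross edges from \emph{both}, and $\hat S$ does straddle. Your final formulation (``order-preserving merge of the density sequences'') is the correct statement and is exactly what you should prove; proving the no-straddling version would send you down a wrong path and you would have to patch the tie case anyway. The merge version follows from the mediant inequality $\min(a_1/b_1,a_2/b_2)\le (a_1+a_2)/(b_1+b_2)$ plus set-minimality, and that is where the uncrossing you mention lives. Second, claim (i) is more elementary than you make it sound: the blocks of $P(S_1)$ are \emph{by definition} the connected components of $(V,S_1)$, so they are connected in $G$ for free; the actual content is that the partition-optimization picture (Thorup) and the edge-subset-optimization picture (dense decomposition) have the same optimum and the same finest optimizer, which needs both ``disconnected blocks can be refined for free'' and ``deleted non-cross edges can be restored for free.'' Make both directions explicit rather than just the first.
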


\subsection{Greedy tree packing converge to ideal relative loads}
Thorup considered the following greedy tree packing algorithm.
For each edge define a load $\ell(e)$ which is initialized to $0$.
The algorithm proceeds in iterations. In iteration $i$ the algorithm
computes an MST $T_i$ in $G$ with respect to edge weights $w(e) =
\ell(e)$. The load of each edge $e \in T_i$ is increased by $1$.
Thorup showed that as $k \rightarrow \infty$, the quantity
$\ell(e)/k$ converges to $\ell^*(e)$ for each edge $e$. His proof is fairly technical. In this section, we present a
different proof of this fact that uses the machinery we have built thus
far. 

\begin{lemma}
\lemlab{lstar-lex-max}
The vector $\ell^\ast$ is the lexicographically maximal base of the spanning tree polytope.  
\end{lemma}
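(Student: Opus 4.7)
The plan is to combine Theorem~\ref{fujishige-quad} with \lemref{thorup:rec:ok}. The spanning tree polytope of $G$ is by definition the base polytope $B_f$ of the rank function $f = r$ of the graphic matroid $\mathcal{M}_G$, which is monotone, normalized, and submodular. So the polymatroid form of Fujishige's theorem applies: the lex max base $b^\ast \in B_f$ is the unique minimizer of $\sum_e x_e^2$ over $B_f$, and it satisfies $b^\ast_e = \lambda_e$ for every $e \in E$, where $\lambda_e$ is the density assigned to edge $e$ by the dense decomposition of $f$.

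Next, I would invoke \lemref{thorup:rec:ok}, which already identifies these dense-decomposition weights with Thorup's ideal loads $\ell^\ast$. At the first level of the decomposition the ratio $(f(E)-f(S_1))/(|E|-|S_1|)$ equals $1/\tau(G)$ by the Tutte--Nash-Williams theorem, which is exactly Thorup's assignment, and the same identification persists on each recursively considered subgraph. Hence $\ell^\ast_e = \lambda_e$ for every $e$, and combining with Fujishige's theorem yields $\ell^\ast_e = b^\ast_e$ for every edge, so $\ell^\ast = b^\ast$ is the lex max base of the spanning tree polytope $B_f$.

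The only point requiring care is a convention check. The paper describes two variants of dense decomposition for submodular functions (\thmref{decompositions-same}) which produce the same sets but reciprocal densities. Both the density used in the Thorup section (which assigns $1/\tau_i$ at level $i$) and the density appearing in Theorem~\ref{fujishige-quad} (which is naturally understood by passing through the supermodular dual $g(X) = f(E) - f(E\setminus X)$ from \lemref{submodularnegissuper}, using $B_f = B_g$ and applying the supermodular dense decomposition to $g$) are the contraction-variant densities, and so they agree. With this alignment, the lemma is a direct composition of the two results, and I do not anticipate any real obstacle beyond this bookkeeping.
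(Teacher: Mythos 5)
Your proposal is correct and follows essentially the same route as the paper's own proof: invoke \lemref{thorup:rec:ok} to identify Thorup's ideal loads with the dense decomposition of the graphic matroid's rank function, observe that the base polytope of that rank function is the spanning tree polytope, and apply Theorem~\ref{fujishige-quad} to conclude that the dense decomposition vector is the lexicographically maximum base. Your extra care with the deletion- versus contraction-variant density conventions (via the supermodular dual $g$ and the identity $B_f = B_g$) makes explicit a bookkeeping step that the paper's terse two-sentence proof leaves implicit, but the substance of the argument is identical.
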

\begin{proof}
We showed that Thorup's algorithm simply runs the dense decomposition
on the graph for the rank function of the graphic matroid induced by
$G$. The bases of the matroid are the spanning trees of $G$ and hence
the base polytope of $f$ is the spanning tree polytope of $G$.
The dense decomposition gives us the lexicographically maximum base
$f$, and hence $\ell^\ast$ is the lexicographically maximal base of
the spanning tree polytope.
\end{proof}

Hence, $\ell^\ast$ is the unique solution to the quadratic program of
minimizing $\sum_{e}\ell(e)^2$ subject to $\ell \in \textsc{SPT}(G)$ where
$\textsc{SPT}(G)$ is the spanning tree base polytope. We can thus
apply a noisy Frank-Wolfe algorithm on the quadratic program to get Algorithm \ref{GreedyTreePackingFW}.

\begin{algorithm}
Input: $G(V,E)$ 
  \begin{algorithmic}
    \State Initialize $l^{(0)}(u) = \mathbb{1}\{e\in T\}$ for any spanning tree $T$.
    \For {$k\leftarrow 0$ to $T-1$} 
        \State $\gamma \leftarrow \frac{1}{k+1}$
        \State $d^{(k+1)} \leftarrow \min\limits_{s \in \textsc{SPT}(G)} \langle l^{(k)},s  \rangle$ \Comment{This is the minimum spanning tree with respect to $l^{(k)}$}
        \State $l^{(k+1)}\leftarrow (1-\gamma)l^{(k)}+\gamma d^{(k+1)}$
    \EndFor{}
  \Return $b^{(T)}$
 \end{algorithmic}
  \caption{\textsc{Frank-Wolfe-Greedy-TreePack($G(V,E)$)}} \label{GreedyTreePackingFW}
 \end{algorithm}

 The main observation is that this algorithm is
 \textbf{exactly} the same as the Thorup's greedy tree packing!
 Indeed, observe that $(k+1)\ell^{(k+1)} \leftarrow k\ell^{(k)}+d^{(k+1)} = k\ell^{(k)} +
 \mathbb{1}\{e\in \textsc{MST}(G, \ell^{(k)})\}$ where
 $\textsc{MST}(G,w)$ is a minimum spanning tree of $G$ with respect to
 edge weights $w$. Since noisy Frank-Wolfe converges, then
 $\ell^{(k)}$ converges to $\ell^\ast(e)$, and greedy tree packing
 converges.

 We now establish the convergence guarantee for greedy tree packing. 
 For the spanning tree polytope of an $m$ edge graph, the curvature constant $C_f \leq 4m$ because for $x,y\in B_f$, $2(x-y)^T(x-y) = \sum_{e\in E} (x_e-y_e)^2 \leq 4m$. Plugging this bound into \thmref{FW-Resistant}, we get that after $k=O(\frac{m  \log (m/\epsilon)}{\epsilon^2})$ iterations, we have $\normX{\ell^{(k)}-\ell^\ast} \leq \epsilon$. 

Suppose we run the standard Frank-Wolfe algorithm with $\gamma=2/(k+2)$. Then, the convergence guarantee improves to $O(\frac{m}{\epsilon^2})$. Note that each iteration still corresponds to finding an MST in the graph with weights. However, the load vector is no longer a simple average of the trees taken so far.

\mypara{Comparison to Thorup's bound and analysis:} Thorup \cite{Thorup08} considered ideal tree packings in capacitated graphs; let $c(e) \ge 1$ (via scaling) denote the capacity of edge $e$. Via \cite{fujishige}, one sees that the optimum solution of the quadratic program $\sum_{e} x_e^2/c(e) \text{~subject to~} x \in SP(G)$ is the ideal load vector $\ell^\ast$. Greedy tree packing generalizes to the capacitated case easily; in each iteration we compute the MST with respect to weights $w(e) = \ell(e) c(e)$. Thorup proved the following.
\begin{theorem}[\cite{Thorup08}]
    Let $G=(V,E)$ be capacitated graph. Greedy tree packing after $O(\frac{m \log (mn/\epsilon)}{\epsilon^3})$ iterations ouputs a load vector $\ell$ such that for each edge $e \in E$, $(1-\eps) \ell^*(e) \le \ell(e) \le (1+\eps)\ell^*(e)$.
\end{theorem}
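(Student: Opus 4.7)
The plan is to reprove Thorup's capacitated bound using the Frank--Wolfe machinery developed in the excerpt, rather than his combinatorial first-principles argument. First, I would formulate the capacitated quadratic program $\min \sum_{e} x_e^2/c(e)$ subject to $x \in \mathrm{SPT}(G)$. Fujishige's theorem, applied to the weighted graphic matroid (equivalently, to the polymatroid of the submodular rank function with each edge reweighted by $1/c(e)$), identifies its unique minimizer as the capacitated ideal load vector $\ell^*$; the argument of \lemref{lstar-lex-max} and \lemref{thorup:rec:ok} carries over verbatim once one checks that the recursive Tutte--Nash-Williams decomposition with capacities produces the same break points as the dense decomposition of the weighted rank function. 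The gradient at $\ell$ has $e$-th entry $2\ell(e)/c(e)$, so the linear minimization oracle returns an MST with edge weights $\ell(e)/c(e)$. A direct calculation shows this is the same tree produced by Thorup's greedy step (which uses weights proportional to $\ell(e)c(e)$ for a rescaled variable), so greedy tree packing is again an instance of noisy Frank--Wolfe with step size $\gamma_k = 1/(k{+}1)$.

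Next, I would invoke \thmref{FW-Resistant} together with strong convexity of the quadratic. The curvature of the objective is bounded by $C_f = O(m_c)$ where $m_c = \sum_e c(e)$ (or by $O(m)$ after normalizing capacities so that the base polytope has $\ell_2$-diameter $O(\sqrt{m})$). Strong convexity upgrades the functional gap $f(\ell^{(k)}) - f(\ell^*) = O(C_f \log k / k)$ to the squared-distance bound $\|\ell^{(k)} - \ell^*\|_2^2 \le O(C_f \log k / k)$, hence after $O(m \log(m/\epsilon')/\epsilon'^2)$ iterations we achieve the additive per-edge guarantee $|\ell^{(k)}(e) - \ell^*(e)| \le \epsilon'$ uniformly in $e$.

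Finally, I would convert the additive bound into Thorup's multiplicative bound. The key structural fact is that for every edge $e$, $\ell^*(e) = 1/\tau(G_i)$ where $G_i$ is the dense-decomposition layer containing $e$, and $\tau(G_i)$ is bounded above by a polynomial in $m$, $n$, and the maximum capacity ratio. Choosing $\epsilon' = \epsilon/\mathrm{poly}(m,n)$ turns the additive bound into the desired $(1\pm\epsilon)$-multiplicative bound on every edge, which yields an iteration count of the form $\mathrm{poly}(m,n)\log(mn/\epsilon)/\epsilon^2$.

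The main obstacle is this last conversion step: plugging a crude lower bound on $\ell^*(e)$ into the additive rate costs a polynomial factor in $m$ and $n$, whereas Thorup's statement only loses a single $1/\epsilon$ factor (from $\epsilon^2$ to $\epsilon^3$) and a logarithmic factor $\log(mn/\epsilon)$. Closing this gap requires a more refined amortization: one should charge the small ideal loads against the geometric structure of the dense decomposition (so that edges requiring very small additive accuracy sit only in deep layers, which are few) and sum a geometric series in the decomposition depth, rather than applying a worst-case lower bound to all edges simultaneously. Making this layer-wise accounting interact cleanly with the Frank--Wolfe iterates is where Thorup's combinatorial analysis is genuinely doing more work than generic convex optimization, and is the technical heart of the proof.
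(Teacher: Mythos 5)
The theorem you are asked to prove is Thorup's result, quoted from \cite{Thorup08}; the paper does not prove it. It is stated as a known theorem for the purpose of comparison, and the paper's own contribution in this section is an \emph{alternative} Frank--Wolfe-based convergence guarantee (additive in $\ell_2$ norm with $1/\eps^2$ dependence) that the authors explicitly describe as incomparable to Thorup's per-edge multiplicative bound. The gap you honestly flag in your final paragraph --- that a crude lower bound on $\ell^*(e)$ costs $\mathrm{poly}(m,n)$ factors when converting an additive $\ell_2$ bound into a per-edge relative one, whereas Thorup loses only a $1/\eps$ and a logarithm --- is precisely the obstruction the paper acknowledges. The closing discussion of \secref{treepacking} says as much: the quadratic objective is suited to additive guarantees, the softmax/MWU objective is suited to relative ones, Thorup's analysis does genuinely more work in the capacitated per-edge setting, and a unified analysis recovering both is left as an open direction. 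So your sketched ``layer-wise amortization'' is not filled in by the paper either; completing it would be a new result, not a rederivation, and as written your proposal does not constitute a proof of the stated theorem.

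One further detail that would need to be pinned down if you pursue this route: the gradient of $\sum_e x_e^2/c(e)$ gives LMO weights proportional to $\ell(e)/c(e)$, while the paper states Thorup's capacitated greedy step computes the MST with weights $w(e)=\ell(e)\,c(e)$. Whether these produce the same tree hinges on exactly what $\ell$ denotes (absolute load, relative load, or load per unit capacity), and your phrase ``for a rescaled variable'' glosses over a reparameterization that must be made explicit before one can claim that the noisy Frank--Wolfe iterates coincide with Thorup's greedy tree packing in the capacitated case.
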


We observe that if all capacities are $1$ (or identical) then Thorup's guarantee is 
that $|\ell(e) - \ell^\ast(e)| \le O(\eps)$. For this case, via Frank-Wolfe, we obtain the much stronger guarantee that $||\ell - \ell^\ast|| \le \eps$ which easily implies the per edge condition, however the per edge guarantee does not imply a guarantee on the norm. Further, in the unweighted case, our iteration complexity dependence on $\epsilon$ is $1/\eps^2$ while Thorup's is $1/\eps^3$. Thorup's guarantee works for the capacitated case in strongly polynomial number of iterations. We can adapt the Frank-Wolfe analysis for the capacitated case but it would yield a bound that depends on 
$C = \sum_e c(e)$ (in the unweighted case $C = m$); on the other hand the guarantee provided by Frank-Wolfe is stronger. 

It may seem surprising that the same greedy tree packing algorithm yields different types of guarantees based on the type of analysis used. We do not have a completely satisfactory explanation but we point out the following. Thorup's analysis is a non-trivial refinement of the standard MWU type analysis of tree packing \cite{pst-faafpcp-95, Young1995RandomizedRW}. See \cite{AHK-MWU} for an excellent survey on MWU. As already noted in \cite{farouk-neurips}, if one use Frank-Wolfe (with $\gamma = 1/(k+1)$) with the softmax potential function that is standard in MWU framework, then the resulting algorithm would also be greedy tree packing. Fujishige's uses a quadratic objective to guarantee that the optimum solution is the unique maximal base but in fact any increasing strongly convex function would suffice. In the context of optimizing a linear function over $B_f$, due to the optimality of the greedy algorithm for this, the only thing that determines the base is the ordering of the elements of $V$ according to the weight vector; the weights themselves do not matter. Thus, Frank-Wolfe applied to different convex objectives can result in the same greedy tree/base packing algorithm. However, the specific objective can determine the guarantee one obtains after a number of iterations. The softmax objective is better suited for obtaining relative error guarantees while the quadratic objective is better suited for obtaining additive error guarantees. Thorup's analysis is more sophisticated due to the per edge guarantee in the capacitated setting. A unified analysis that explains both the relative and additive guarantees is desirable and we leave this is an interesting direction for future research.

\newpage 

\bibliographystyle{unsrt}
\bibliography{dsg}

\newpage 
\section{Auxiliary Lemmas}

\begin{figure}
    \centering
    \includegraphics[width=0.45\textwidth]{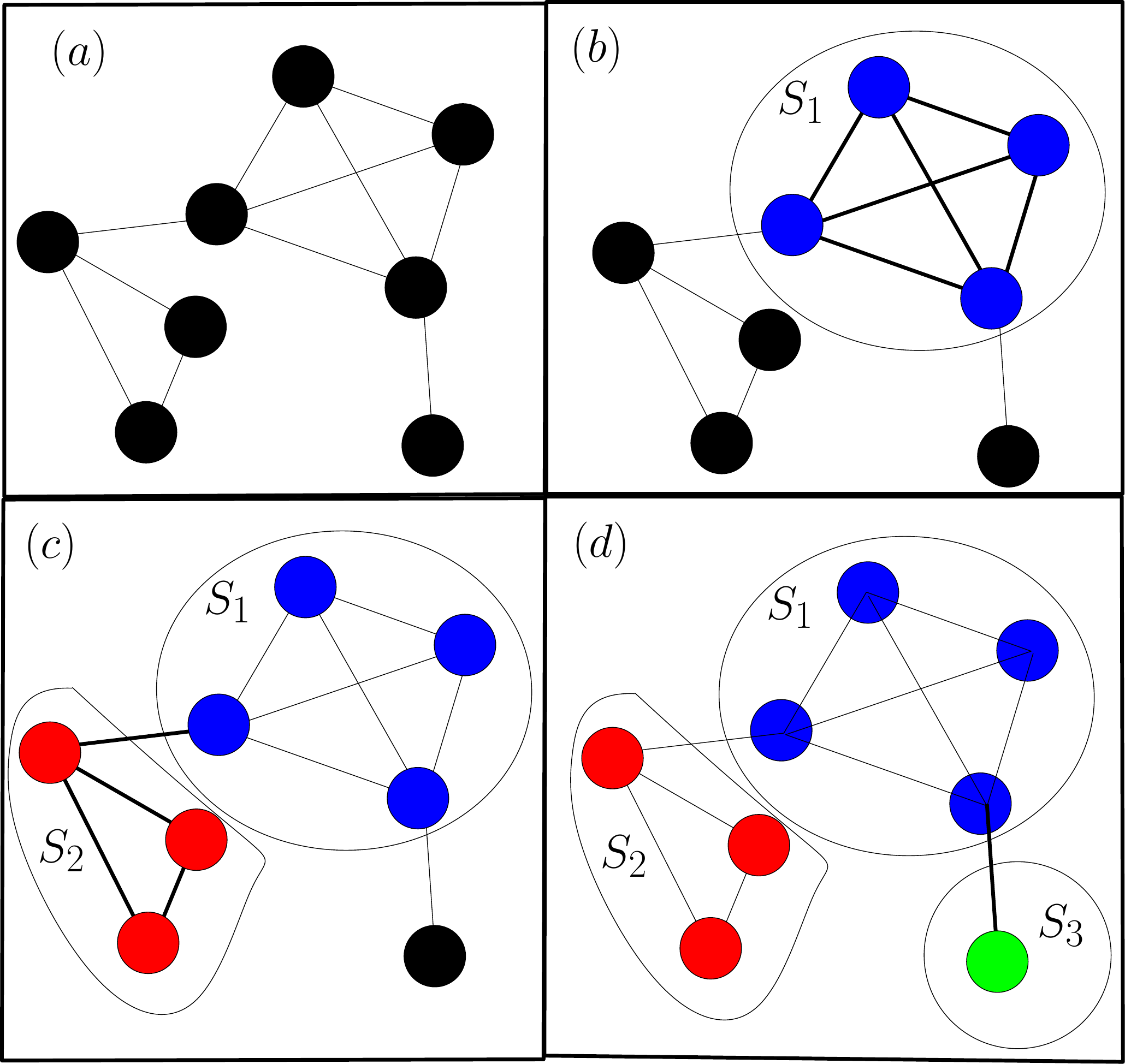}
    \caption{Densest Subgraph Decomposition Example. The densest subgraph $S_1$ is shown in $(b)$ with blue vertices with density $6/4=1.5$. We ``contract'' $S_1$ (the blue vertices) and find the densest subgraph $S_2$ with density $(3+1)/3=4/3$. Finally, we contract $S_2$, and find $S_3$ with density $(0+1)/1=1$. }
    \label{fig:dense_decomposition_example}
\end{figure}

\begin{lemma}
\lemlab{greedyppgoodelmo}
For $k\geq 0$
\[
\sum_{u\in V} b^{(k)}(u)d^{(k+1)}(u) \leq \left( \min\limits_{ \text{orientation} \overrightarrow{d}}\sum_{u\in V}b^{(k)}(u)\overrightarrow{d}(u) \right) + \frac{\delta C_f}{k+2}
\]
\end{lemma}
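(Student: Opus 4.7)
The plan is to reduce the statement directly to \lemref{scaleapprox}, which already encapsulates the essential work: the fact that \textsc{Weighted-Greedy} on the scaled weight vector $(k+1)b^{(k)}$ achieves an additive error whose magnitude is \emph{independent} of the size of $b^{(k)}$ and scales inversely with $k+1$.

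First, I would unfold the definition of Algorithm~\ref{greedypp}: in iteration $k$, we set $d^{(k+1)} \leftarrow \textsc{Weighted-Greedy}(G, (k+1)b^{(k)})$. Applying \lemref{scaleapprox} with weight vector $w = b^{(k)}$ and scale factor $K = k+1$ then yields
\[
\langle b^{(k)}, d^{(k+1)}\rangle \;\leq\; \langle b^{(k)}, d^*_{b^{(k)}}\rangle + \frac{\sum_u \deg_G(u)^2}{k+1},
\]
where $d^*_{b^{(k)}}$ is the optimal orientation with respect to $b^{(k)}$, i.e.\ the minimizer of $\langle b^{(k)}, \vec{d}\rangle$ over all orientations $\vec{d}$ of $G$. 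This is exactly the linear-minimization-oracle answer in the Frank-Wolfe algorithm applied to the quadratic program $\min \sum_v x_v^2$ over the base polytope $B_f$ (recall that $\nabla f(b^{(k)}) = 2 b^{(k)}$ and $B_f$ coincides with the orientation polytope by \lemref{characterization-dsg}).

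Next, I would convert the additive error into the $\frac{\delta C_f}{k+2}$ form required by the noisy Frank-Wolfe statement \thmref{FW-Resistant}. Using the elementary inequality $\frac{1}{k+1} \leq \frac{2}{k+2}$ valid for $k \geq 0$,
\[
\frac{\sum_u \deg_G(u)^2}{k+1} \;\leq\; \frac{2\sum_u \deg_G(u)^2}{k+2}.
\]
Since the curvature $C_f$ of the quadratic objective over $B_f$ satisfies $C_f = \Theta(\sum_u \deg_G(u)^2)$ (established via the probabilistic-method diameter bound deferred to the appendix, using $C_f \le L \cdot \mathrm{diam}(B_f)^2$ with $L = 2$), we may choose $\delta$ so that $\delta C_f \geq 2\sum_u \deg_G(u)^2$; this determines the specific value of $\delta$ quoted in the excerpt. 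Combining the two displays gives the claim.

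There is no real obstacle: the lemma is a bookkeeping bridge between \lemref{approx}/\lemref{scaleapprox} (the substantive approximation guarantee that the degree-biased peeling order is close to the pure weight-sorted order up to an error depending only on $\sum_u \deg_G(u)^2$) and the noisy Frank-Wolfe convergence theorem \thmref{FW-Resistant}. The only mildly delicate point is recognizing that the scaling $K = k+1$ is exactly what converts the crude additive error $\sum_u \deg_G(u)^2$ from \lemref{approx} into the shrinking error $\frac{\sum_u \deg_G(u)^2}{k+1}$, and that this is what makes \textsc{Greedy++} behave like a noisy Frank-Wolfe iteration whose oracle quality improves with $k$.
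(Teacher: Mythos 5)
Your overall structure is correct and matches the paper's: apply \lemref{scaleapprox} with scale $K = k+1$, use $\frac{1}{k+1} \leq \frac{2}{k+2}$, then absorb the constant $2\sum_u \deg_G(u)^2$ into $\delta C_f$. But the step where you justify the choice of $\delta$ has a real problem. You assert $C_f = \Theta(\sum_u \deg_G(u)^2)$, and cite the bound $C_f \le L \cdot \mathrm{diam}(B_f)^2$ as part of the justification. That is an \emph{upper} bound on $C_f$, and it is irrelevant here: to guarantee $\delta C_f \ge 2\sum_u \deg_G(u)^2$ with a bounded $\delta$, what you need is a \emph{lower} bound on $C_f$. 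The paper's \lemref{probabilisticmethod} establishes only $2m \le C_f \le 2\sum_u \deg_G(u)^2$, via a probabilistic-method lower bound (random independent orientations, then second moments) plus the trivial degree upper bound. These two bounds do not match — $\sum_u \deg_G(u)^2$ can be as large as $\Theta(mn)$ while the lower bound is $2m$ — so the claim $C_f = \Theta(\sum_u \deg_G(u)^2)$ is unjustified and would, if true, entitle you to take $\delta = O(1)$, shaving a factor of $n$ off the final iteration bound $\Tilde{O}(mn^2/\epsilon^2)$. The paper's actual reasoning uses the lower bound $C_f \ge 2m$ and sets $\delta = \Theta\!\left(\frac{\sum_u \deg_G(u)^2}{m}\right)$, which then gives
\[
\frac{\delta C_f}{k+2} \;\ge\; \frac{2\sum_u \deg_G(u)^2}{k+2} \;\ge\; \frac{\sum_u \deg_G(u)^2}{k+1}.
\]
To repair your argument, replace the $\Theta$ claim with the one-sided lower bound $C_f \ge 2m$ and adjust $\delta$ accordingly; the rest of your proposal is sound.
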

\begin{proof}
Using Lemma \ref{scaleapprox} using $(k+1)b^{(k)}$ as the weights, we have that 
\[
\sum_{u\in V} b^{(k)}(u)d^{(k+1)}(u) \leq \left( \min\limits_{\text{orientation}\overrightarrow{d}}\sum_{u\in V}b^{(k)}(u)\overrightarrow{d}(u) \right) + \frac{\sum_{u\in V}deg_G(u)^2}{k+1}
\]
We show in \lemref{probabilisticmethod} that $C_f \geq 2m$ using the probabilistic method, and so for $\delta=\Theta(\frac{\sum_u deg_G(u)^2}{m})$, we have that 
\[
\frac{\sum_{u\in V}deg_G(u)^2}{k+1} \leq \frac{\delta C_f}{k+2}
\]
\end{proof}

\begin{lemma}

\lemlab{probabilisticmethod}
Let $f(b) = \sum_{u}b_u^2$ be the sum of squares of an orientation load vector. Then $2m \leq C_f \leq 2 \sum_u deg_G(u)^2$
\end{lemma}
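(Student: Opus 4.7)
The plan is to evaluate the supremum defining $C_f$ directly, exploiting the fact that $f(b)=\sum_u b_u^2$ is a pure quadratic. Since $\nabla f(x) = 2x$, a one-line expansion gives $f(y) - f(x) - \langle y-x, \nabla f(x)\rangle = \|y-x\|_2^2$. Substituting $y = x + \gamma(s-x)$ turns this into $\gamma^2 \|s-x\|_2^2$, so the $2/\gamma^2$ factor cancels exactly and
\[
C_f \;=\; 2\sup_{x,s\in B_f}\|s-x\|_2^2.
\]
In other words, both inequalities reduce to bounding the squared diameter of the orientation polytope $B_f$ in the $\ell_2$ norm, so the problem becomes entirely combinatorial.

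For the upper bound, I would use the fact that any orientation $b\in B_f$ satisfies $0\le b_u \le \deg_G(u)$ for every vertex $u$ (no vertex can receive more fractional mass from its incident edges than its degree). Consequently $(s_u-x_u)^2 \le \deg_G(u)^2$ coordinatewise, which when summed over $u$ yields $\|s-x\|_2^2 \le \sum_u \deg_G(u)^2$, proving $C_f \le 2\sum_u \deg_G(u)^2$.

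For the lower bound I would use the probabilistic method, which is signalled by the lemma name. Draw two independent uniformly random \emph{integral} orientations $s$ and $x$ (each edge is oriented toward one endpoint with probability $1/2$, independently in $s$ and in $x$); by \lemref{characterization-dsg} these are bona fide bases. For any vertex $u$, write $s_u-x_u = \sum_{e\in\delta(u)} Z_e^u$, where $Z_e^u\in\{-1,0,+1\}$ records whether edge $e$ is oriented toward $u$ in $s$ but not in $x$, or vice versa. The $Z_e^u$ across distinct edges $e$ incident to $u$ are independent with mean zero and $\mathbb{E}[(Z_e^u)^2]=1/2$, so $\mathbb{E}[(s_u-x_u)^2]=\deg_G(u)/2$. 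Summing over $u$ and applying the handshake identity gives $\mathbb{E}[\|s-x\|_2^2]=m$, so some realization of $(s,x)$ achieves $\|s-x\|_2^2\ge m$, and therefore $C_f \ge 2m$.

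There is no real obstacle here; the only subtle point to verify is the mean-zero, independence-across-edges structure of the $Z_e^u$ variables, which is immediate because each edge is an independent fair coin flip in both $s$ and $x$. The bound cannot be tightened by the same argument because pairs of variables at the two endpoints of one edge are perfectly anti-correlated, which is already reflected in $\mathbb{E}[\|s-x\|_2^2]=m$ rather than $2m$.
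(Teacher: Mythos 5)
Your proof is correct and follows the same route as the paper's: reduce $C_f$ to twice the squared $\ell_2$-diameter of the orientation polytope, bound the diameter coordinatewise via $0\le b_u\le \deg_G(u)$ for the upper bound, and use two independent uniformly random integral orientations for the lower bound. The only cosmetic difference is that you decompose $s_u-x_u$ into per-edge indicators $Z_e^u$ while the paper invokes the variance of a difference of independent $\mathrm{Bin}(\deg_G(u),\tfrac12)$ variables; both yield $\mathbb{E}[(s_u-x_u)^2]=\deg_G(u)/2$ and hence $C_f\ge 2m$.
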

\begin{proof}
Let $\mathcal{D}$ be the set of valid orientations. Then using the definition of $C_f$ and simplifying, we have that 
\[
C_f = \sup_{x, s \in \mathcal{D}, \gamma \in [0,1]} \frac{2}{\gamma^2} (f(x+\gamma(s-x))-f(x) - \langle \gamma (s-x), 2x \rangle ) = \sup_{x, s \in \mathcal{D}} 2(s-x)^T(s-x)
\]

Let $x,s \in \mathcal{D}$. Clearly $x_u, s_u \leq deg_G(u)$ since they are orientations. So $(s_u-x_u)^2 \leq deg_G(u)^2$. Summing over $u$ establishes the upper bound. 

For the lower bound, we use the probabilistic method. Arbitrarily orient any edge $(i,j)$ with probability $1/2$ towards $i$ and $1/2$ towards $j$. This induces a load $b^1$. Repeat this \textbf{independently} to induce load $b^2$. Note that for any vertex $u$, $b^1_u, b^2_u \sim Bin(deg_G(u), \frac{1}{2})$ and are independent. The variance of $b^1_u - b^2_u$ is thus $(deg_G(u)+deg_G(u))\times 1/2 \times 1/2 = \frac{1}{2}deg_G(u)$. Hence, we have that 
\[
\frac{1}{2}deg_G(u) = \Ex{ (b^1_u - b^2_u)^2 } - \Ex{(b^1_u - b^2_u)}^2 = \Ex{ (b^1_u - b^2_u)^2 }
\]
So we have that 
\[
\Ex{2(b^1-b^2)^T(b^1-b^2)} = 2\sum_{u\in V} \Ex{ (b^1_u - b^2_u)^2 } = 2m
\]
So there must be a realization with $2(b^1-b^2)^T(b^1-b^2)$ at least the expectation value $2m$, and hence the supremum is at least $2m$.

\end{proof}

\section{Proofs}

\subsection{Proof of \thmref{FW-Resistant}}
\apdxlab{pf:FW-Resistant}
\begin{proof}
  We will note that this proof is slight adaptation of Jaggi's proof
  \cite{pmlr-v28-jaggi13}, and is included here for the sake of
  completeness.

Let ${\bf \hat{d}}^{(k+1)}$ be the direction returned by the
approximate linear minimization oracle in iteration $k$. For any
$\gamma \in [0,1]$, from the definition of the curvature constant
$C_f$, we have that
\begin{equation}
\label{first}
f({\bf b}^{(k+1)}) = f({\bf b}^{(k)} + \gamma {\bf \hat{d}}^{(k+1)}) \leq f({\bf b}^{(k)}) + \gamma \langle {\bf \hat{d}}^{(k+1)}-{\bf b}^{(k)}, \nabla f({\bf b}^{(k)}) \rangle + \frac{\gamma^2}{2}C_f
\end{equation}
Note that the ${\bf \hat{d}}^{(k+1)}$ we use is a $\frac{\delta C_f}{k+2}$-approximate linear minimization oracle, and hence $\langle {\bf \hat{d}}^{(k+1)}, \nabla f({\bf b}^{(k)}) \rangle  \leq \langle {\bf d}^{(k+1)}, \nabla f({\bf b}^{(k)}) \rangle + \frac{\delta C_f}{k+2}$. Combining this with (\ref{first}) and rearranging, we get
\begin{equation}
\label{second}
f({\bf b}^{(k+1)}) \leq f({\bf b}^{(k)}) - \gamma g({\bf b}^{(k)}) + \frac{\gamma^2}{2}C_f(1+\delta)
\end{equation}
Where $g({\bf b}^{(k)})= \langle {\bf d}^{(k+1)} - {\bf b}^{(k)}, \nabla f({\bf b}^{(k)}) \rangle $. Next, denote $h({\bf b}^{(k)})=f({\bf b}^{(k)})-f({\bf b}^\ast)$ for the primal error. Convexity  of $f$ implies  that  the  linearization $f({\bf b}) + \langle {\bf s}-{\bf b}, \nabla f({\bf b}) \rangle $ always lies below the graph of $f$. This implies $g({\bf b}^{(k)}) \geq h({\bf b}^{(k)})$. Combining this with (\ref{second}), we get
\begin{equation}
\label{third}
f({\bf b}^{(k+1)}) \leq f({\bf b}^{(k)}) - \gamma h({\bf b}^{(k)}) + \frac{\gamma^2}{2}C_f(1+\delta)
\end{equation}
Subtracting $f({\bf b}^\ast)$ from both sides of (\ref{third}), we get
\begin{equation}
\label{fourth}
h({\bf b}^{(k+1)}) \leq (1-\gamma) h({\bf b}^{(k)}) + \frac{\gamma^2}{2}C_f(1+\delta)
\end{equation}
Let $C = \frac{1}{2}C_f(1+\delta)$ and $\epsilon_k = h({\bf b }^{(k)})$. Then from (\ref{fourth}) we get the recurrence 
\begin{equation}
\label{fifth}
\epsilon_{k+1} \leq (1-\gamma) \epsilon_k + \gamma^2C
\end{equation}
We claim that $\epsilon_k \leq \frac{4C H_{k+1}}{k+1}$ which implies the theorem. For $k=0$, (\ref{fifth}) with $\gamma=\frac{1}{0+1}$ implies $\epsilon_{0+1}\leq C \leq 4C$. For $k\geq 1$, we want the RHS of (\ref{fifth}) to satisfy
\[
(1-\frac{1}{k+1})\frac{4CH_{k+1}}{k+1} + \frac{C}{(k+1)^2} \leq \frac{4CH_{k+2}}{k+2}
\]
Alternatively, this is the same as $4(k+2)H_{k+1}+k(3k+4)\geq 0$ after rearranging, which is satisfied for $k\geq 1$. 
\end{proof}

\subsection{Proof of \lemref{submodularnegissuper}}
\apdxlab{pf:submodularnegissuper}
\begin{proof}
We will show that $h(X)=f(V\setminus X)$ is submodular for submodular $f$. This would imply $-f(V\setminus X)$ is supermodular, and since $f(V)$ is modular, then $g$ is supermodular. 

Let $A, B \subseteq V$. To see why $h$ is submodular, we have the inequalities:
\[
h(A)+h(B) = f(V\setminus A) + f(V\setminus B) \geq f(V\setminus A \cup V\setminus B) + f(V\setminus A \cap V\setminus B)
\]
Note that $V\setminus A \cup V\setminus B = V\setminus (A \cap B)$. In addition, $V\setminus A \cap V\setminus B = V\setminus (A\cup B)$. Hence
\[
h(A)+h(B) \geq  f(V \setminus (A\cup B)) + f(V \setminus (A\cap B)) = h(A\cup B) + h(A\cap B)
\]
\end{proof}

\subsection{Proof of \lemref{sprmod:unqqq}}
\apdxlab{pf:sprmod:unqqq}
\begin{proof}
    Let $S_1, S_2\subseteq V$ be maximal sets achieving the maximum $\lambda = \frac{f(S)}{\cardin{S}}$. Then we have by supermodularity 
    \[
    \frac{f(S_1 \cup S_2)}{\cardin{S_1 \cup S_2}} = \frac{f(S_1 \cup S_2)}{\cardin{S_1}+\cardin{S_2} - \cardin{S_1\cap S_2}} \geq \frac{f(S_1)+f(S_2)-f(S_1\cap S_2)}{\cardin{S_1}+\cardin{S_2} - \cardin{S_1\cap S_2}}
    \]
    Note that $f(S_1\cap S_2)\leq \lambda \cardin{S_1\cap S_2}$ by optimality of $\lambda$ which implies the continued chain  
    \[
    \frac{f(S_1 \cup S_2)}{\cardin{S_1 \cup S_2}} \geq \frac{f(S_1)+f(S_2)-\lambda |S_1\cap S_2|}{\cardin{S_1}+\cardin{S_2} - \cardin{S_1\cap S_2}} = \frac{\lambda \cardin{S_1}+  \lambda \cardin{S_2} - \lambda \cardin{S_1 \cap S_2}}{\cardin{S_1}+\cardin{S_2} - \cardin{S_1\cap S_2}} = \lambda 
    \]
By optimality of $\lambda$, $f(S_1\cup S_2) = \lambda \cardin{S_1\cup S_2}$. By maximality of $S_1, S_2$, $S_1=S_1\cup S_2 = S_2$. 
\end{proof}

\subsection{Proof of \lemref{minimality:submodular}}
\apdxlab{pf:minimality:submodular}
\begin{proof}
    Let $S_1, S_2 \subseteq V$ be minimal sets that minimizes the ratio $ (|V|-|S|)/(f(V)-f(S))$ with value $\lambda$. Then by supermodularity of $g(S)=f(V)-f(S)$, we have
    \[
    \frac{|V|-|S_1\cap S_2|}{f(V)-f(S_1\cap S_2)} = \frac{|V|-|S_1\cap S_2|}{g(S_1\cap S_2)} \leq \frac{|V|-|S_1\cap S_2|}{g(S_1)+g(S_2)-g(S_1\cup S_2)}
    \]
    Note that $|V|-|S_1\cap S_2| = |V|-|S_1|+|V|-|S_2| - |V|+|S_1\cup S_2|$. In addition $\lambda g(S_1\cup S_2)\leq  |V|-|S_1\cup S_2|$ by optimality of $\lambda$. Then we get the chain of inequalities
    \[
    \frac{|V|-|S_1\cap S_2|}{f(V)-f(S_1\cap S_2)} \leq \frac{\lambda g(S_1)+\lambda g(S_2)-\lambda g(S_1\cup S_2)}{g(S_1)+g(S_2)-g(S_1\cup S_2)} = \lambda
    \]
    By optimality of $\lambda$, $(|V|-|S_1\cap S_2|)/(f(V)-f(S_1\cap S_2)) = \lambda$. By minimality of $S_1, S_2$, $S_1=S_1\cap S_2 = S_2$. 
\end{proof}

\subsection{Proof of \thmref{decompositions-same}}
\apdxlab{pf:decompositions-same}
\begin{proof}
\begin{figure}
    \centering
\includegraphics[width=0.4\textwidth]{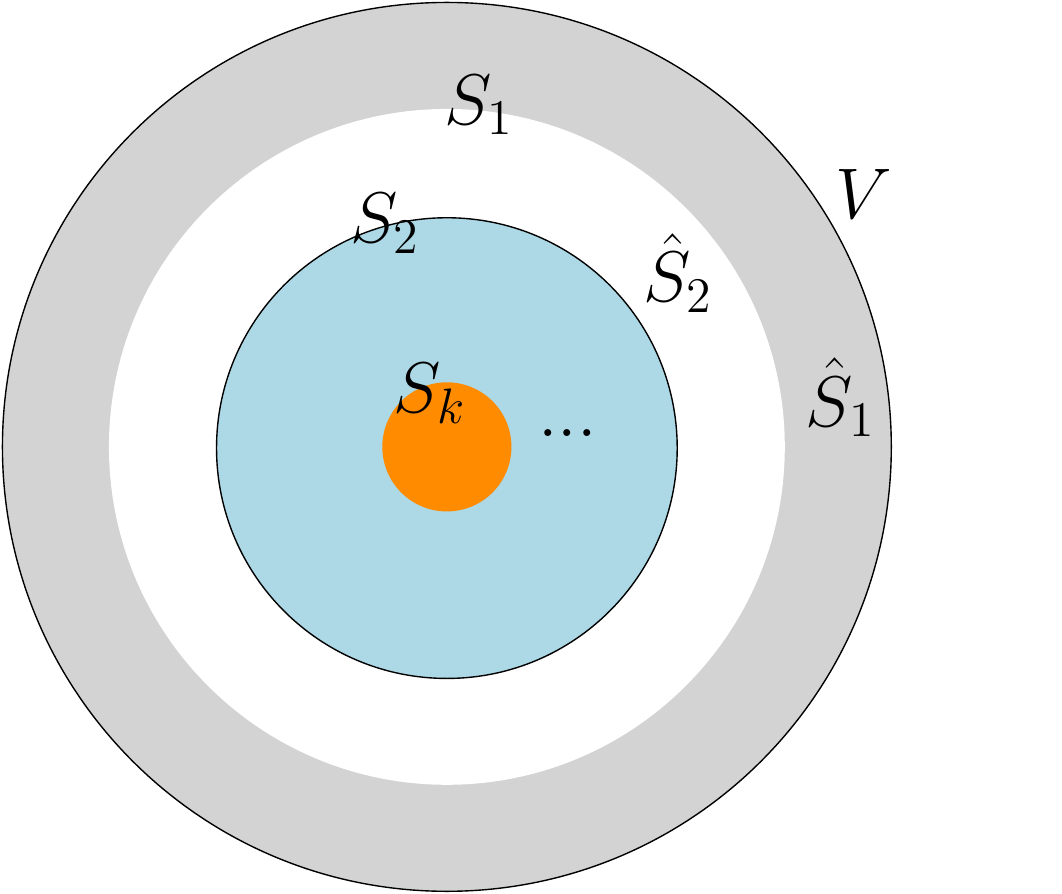}
    \caption{Contraction based decomposition for a submodular function $f$. Shown is the Venn-diagram of $S_1, ..., S_k$ and $\hat{S}_1, ..., \hat{S}_k$}
    \label{fig:decomposition_proof}
\end{figure}
See Figure \ref{fig:decomposition_proof} throughout this proof. 
Note that $\hat{S}_1, ..., \hat{S}_k$ are associated with densities
\[
\frac{|V|-|S_1|}{f(V)-f(S_1)} < \frac{|S_1|-|S_2|}{f(S_1)-f(S_2)} < ... <\frac{|S_{k-1}|-|S_k|}{f(S_{k-1})-f(S_k)}
\]
Similarly, $T_1, ..., T_{k'}$ are associated with densities 
\[
\frac{f(V)-f(V\setminus T_1) }{|T_1|} > \frac{f(V\setminus T_1)-f(V\setminus T_1 \setminus T_2)}{|T_2|} > ... > \frac{f(T_{k'})-f(\phi)}{|T_{k'}|}
\]

We prove the claim by induction. Suppose $T_1, ..., T_{i-1}$ is the same as $\hat{S}_1, ..., \hat{S}_{i-1}$ with the trivial base case. 

First note that $T_i \subseteq S_{i-1}$ since $T_1\cup ... \cup T_{i-1} = (V\setminus S_1)\cup (S_1 \setminus S_2) \cup ... (S_{i-2}\setminus S_{i-1}) = V\setminus S_{i-1}$ and the disjointedness of $\{T_j\}$. Now observe that by the optimality of $S_i$ that 
\[
\frac{|S_{i-1}|-|S_i|}{f(S_{i-1})-f(S_i)} \leq \frac{|S_{i-1}|-|S_{i-1}\setminus T_i|}{f(S_{i-1})-f(S_{i-1}\setminus T_i)} = \frac{|T_i|}{f(S_{i-1})-f(S_{i-1}\setminus T_i)} = \frac{|T_i|}{f(V\setminus \bigcup_{j<i}T_j)-f(V\setminus \bigcup_{j\leq i}T_j)}
\]
Conversely, by the optimality of $T_i$
\[
\frac{f(V\setminus \bigcup_{j<i}T_j)-f(V\setminus \bigcup_{j\leq i}T_j)}{|T_i|} \geq \frac{f(V\setminus \bigcup_{j<i}T_j)-f(V\setminus \bigcup_{j< i}T_j\setminus (S_{i-1}\setminus S_i))}{|S_{i-1}\setminus S_i|} = \frac{f(S_{i-1})-f(S_{i})}{|S_{i-1}|-|S_i|}
\]
Hence $\lambda_{i} = \hat{\lambda}_i$. This also forces $S_i=S_{i-1}\setminus T_i$ or $T_i=S_{i-1}\setminus S_i = \hat{S}_i$. 
\end{proof}

\subsection{Proof of \lemref{approx}}
\apdxlab{pf:approx}
\begin{proof}
Consider the optimal orientation of the edges $d^\ast_w$. How does reversing one edge (from $(u, v)$ to $(v,u)$) affect the cost of minimization oracle? The out degree of $u$ decreases by $1$ and the out degree of $v$ increases by $1$, and so the objective function \textbf{increases} by $w_v - w_u$. 

Suppose \textsc{Weighted-Greedy++} peels the vertices $u_1, ..., u_n$ in this order. We proceed by induction. Consider all the ``wrongly'' oriented edges $W(u_1)=\{(u_1, u_i) : i>1, w(u_1)>w(u_i) \}$. These edges increase the objective function from the optimal solution value by $\sum_{v\in W(u)}(w(u_1)-w(v))
$. But recall that \textsc{Weighted-Greedy++} chooses $u_1$ because $w(u_1) + deg_{G'}(u_1) \leq w(v) + deg_{G'}(v)
$ for all $v\in W(u)$. Which means that 
\[
w(u_1) - w(v) \leq \deg_{G'}(v) - deg_{G'}(u_1) \leq deg_G(v)
\]
Proceeding by induction on $u_2, ..., u_n$, the ``wrongly'' oriented edges contribute a total of at most $\sum_{u}deg_G(u)^2$ additive error with respect to the correct orientation $d^\ast_w$ since each vertex $v$ contributes at most $deg_G(v)$ from all its neighbors.    
\end{proof}

\subsection{Proof of \lemref{approxDSS}}
\apdxlab{pf:approxDSS}
\begin{proof}
\begin{figure}
    \centering\includegraphics[width=0.8\textwidth]{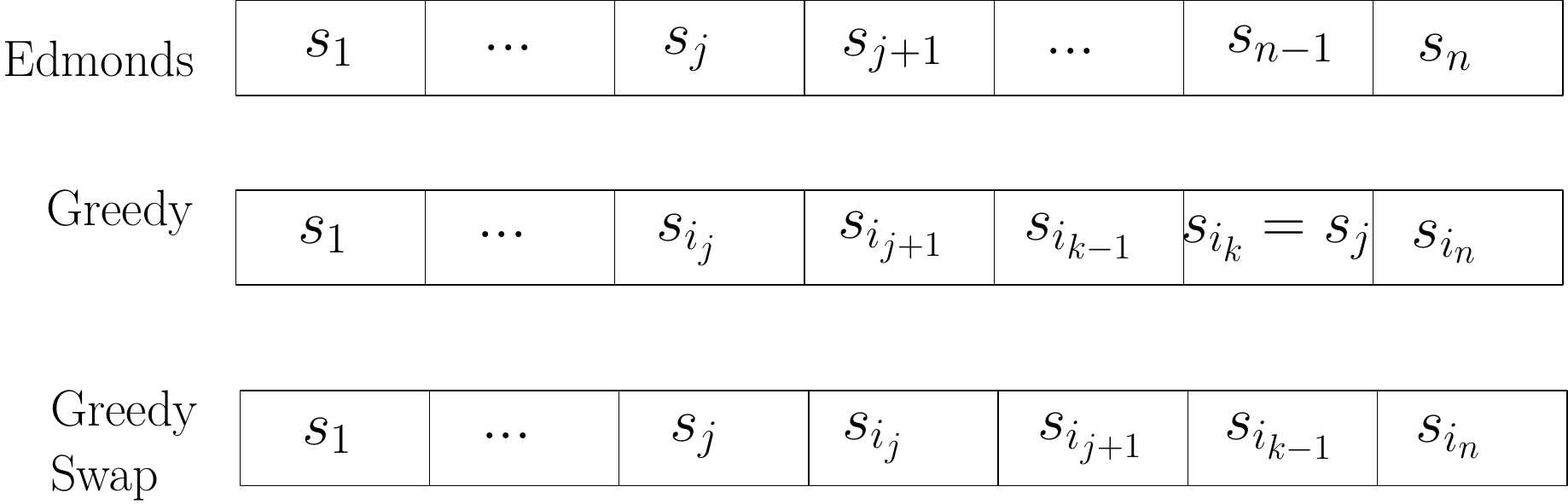}
    \caption{Proof idea of Lemma \lemref{approxDSS}}
    \label{fig:weightedgreedyppProof}
\end{figure}
Suppose \textsc{Weighted-SuperGreedy++} peels $V$ in the order $s_{i_1}, ..., s_{i_n}$, and Edmonds' algorithm peels them in the order $s_1, ..., s_n$ with $w_1\leq ... \leq w_n$. Let $A_j=\{s_{i_{j+1}}, ..., s_{i_n}\}$. Let $j$ be the index of the first disagreement where $s_j \neq s_{i_j}$.  \textsc{Weighted-SuperGreedy++} chooses $s_{i_j}$ over $s_{i_k}$ ($k \geq j$) because 
\[
w(s_{i_j}) + f(s_{i_j}|A_j-s_{i_j}) \leq w(s_{i_k}) + f(s_{i_k}|A_j-s_{i_k})
\]
Which implies by supermodularity
\begin{equation}
\label{weight:difference:bound}
w(s_{i_j}) - w(s_{i_k}) \leq  f(s_{i_k}~|~V-s_{i_k})
\end{equation}
 Suppose $s_{i_k}=s_j$, then we will move $s_{i_j}, s_{i_{j+1}}, ..., s_{i_{k-1}}$ to go after $s_{i_k}=s_j$ in the \textsc{Weighted-SuperGreedy++} order. (see Figure \ref{fig:weightedgreedyppProof}). Swapping consecutive elements $s_{i_a}, s_{i_{a+1}}$ in the \textsc{Weighted-SuperGreedy++} order changes the inner product cost by 
\[
w(s_{i_{a+1}})f(s_{i_{a+1}}|A_{a}+s_{i_a}-s_{i_{a+1}})+w(s_{i_{a}})f(s_{i_{a}}|A_{a+1}) - w(s_{i_a})f(s_{i_{a}}|A_{a})-w(s_{i_{a+1}})f(s_{i_{a+1}}|A_{a+1})
\]
\begin{equation}
\label{difference}
=w(s_{i_a})(f(s_{i_{a}}|A_{a+1}) - f(s_{i_{a}}|A_{a})) - w(s_{i_{a+1}})(f(s_{i_{a+1}}|A_{a+1})-f(s_{i_{a+1}}|A_{a}+s_{i_a}-s_{i_{a+1}}))
\end{equation}
But 
\[
f(s_{i_{a}}|A_{a+1}) - f(s_{i_{a}}|A_{a}) = f(A_{a+1}+s_{i_{a}})-f(A_{a+1}) - f(A_a+s_{i_a})+f(A_a)
\]
And
\[
f(s_{i_{a+1}}|A_{a+1})-f(s_{i_{a+1}}|A_{a}+s_{i_a}-s_{i_{a+1}}) = f(A_{a+1}+s_{i_{a+1}}) - f(A_{a+1}) -f(A_a+s_{i_a}) + f(A_a+s_{i_a} - s_{i_{a+1}})
\]
\[
= f(A_a)-f(A_{a+1}) - f(A_a+s_{i_a}) + f(A_{a+1}+s_{i_a})
\]
And hence both coefficients of $w(s_{i_a}), w({s_{i_{a+1}}})$ in \ref{difference} are equal. Hence by 
(\ref{weight:difference:bound})
\[
(\ref{difference}) = (w_{i_a} - w_{i_{a+1}})(f(s_{i_{a+1}}|A_{a+1})-f(s_{i_{a+1}}|A_{a+1}+s_{i_a}) ) \geq -f(s_{i_{a+1}}|V - s_{i_{a+1}} )^2
\]
Hence, moving all of $s_{i_j}, ..., s_{i_{k-1}}$ to $s_{i_k}=s_j$ \textit{decreases} the cost by at most $\leq (k-j)f(s_{i_{k}}|V)^2 \leq nf(s_{i_{k}}|V-s_{i_{k}})^2$. Summing over all reorderings of the vertices, we see that Edmonds' ordering inner product is at most $n\sum_{u\in V}f(u|V-u)^2$ away from the \textsc{Weighted-SuperGreedy++} order inner product.
For the function $f(S)=|E(S)|$, the bound from \lemref{approx} is better than the bound of Lemma \lemref{approxDSS} by a factor of $n$. We leave improving the bound for future work.  
\end{proof}

\subsection{Proof of \lemref{thorup:rec:ok}}
\apdxlab{pf:thorup:rec:ok}
\begin{proof}
\begin{figure}
    \centering
    \includegraphics[width=0.5\textwidth]{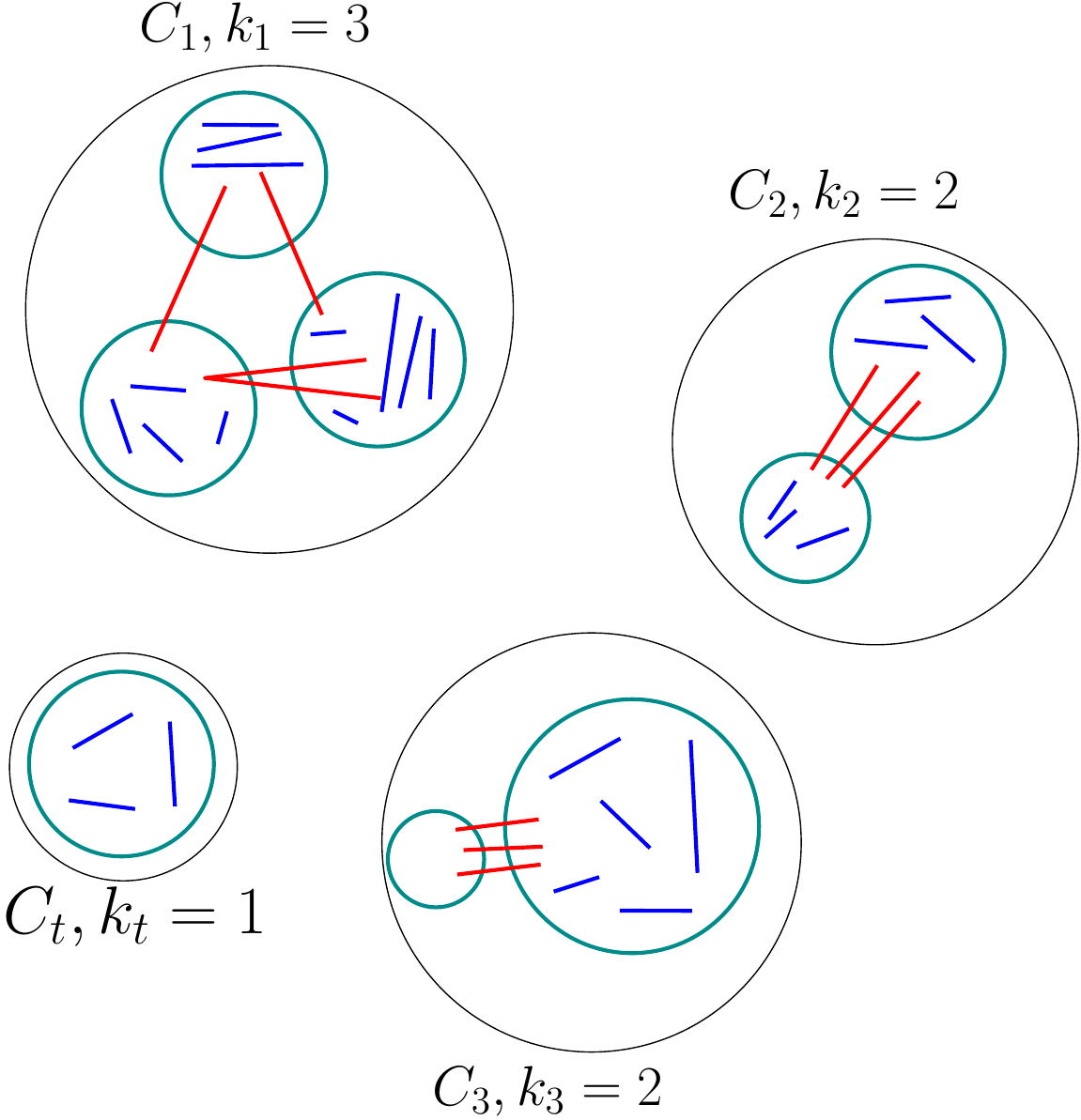}
    \caption{Blue edges are $S_{i}$. Red edges are $\hat{S}_i$. Red and blue edges together are $S_{i-1}$. The red edges in component $C_q$ are $E_q$. Cyan circles inside $C_q$ are components inside $C_q$ after $\hat{S}_i$ is deleted. }
    \label{fig:thorup:decomposition}
\end{figure}
    In the $i$-th iteration, observe that $\hat{S}_i=S_{i-1}\setminus S_i$ is just the edges crossing the partition $P(S_i)$ in the graph
$G'=G(V, S_{i-1})$ remaining from iteration $i-1$. Also,
$\frac{f(S_{i-1})-f(S_{i})}{|S_{i-1}|-|S_i|}=\frac{\kappa(S_i)-\kappa(S_{i-1})}{E(P(S_i))}$. We show that this is the correct value consistent with Thorup's ideal
relative weights that the edges in $\hat{S_i}$ should be set to.

    We proceed inductively on the $i$-th iteration. 

    Throughout the proof see Figure \ref{fig:thorup:decomposition}. Fix $G'=G(V, S_{i-1})$. Let the connected components of $G'$ be $C_1, ..., C_t$ with component $C_q$ having $k_q$ connected components after deleting $\hat{S}_i=S_{i-1}\setminus S_i$ and contributing edges $E_q$ to the cross edges in $\hat{S}_i$. Let $k=\sum_i k_i$. 

    If for some component $C_q$, we have $\frac{E_q}{k_q-1} < \frac{|S_{i-1}|-|S_{i}|}{k-1}$, then for $S_i'=S_{i-1}\setminus E_q$, we have 
    \[
    \frac{|S_{i-1}|-|S_i'|}{\kappa(S_{i-1}-\kappa(S_{i}'))} = \frac{|E_q|}{k_q-1} < \frac{|S_{i-1}|-|S_{i}|}{k-1}
    \]
    A contradiction to the optimality of $S_i$. \\ 

    Hence, for all components $C_q$ with $k_q>1$,  $\frac{E_q}{k_q-1} \geq  \frac{|S_{i-1}|-|S_i|}{k-1}$. But then for $S_i'=S_i\cup E_q$, we have 
    \[
    \frac{|S_{i-1}|-|S_i'|}{\kappa(S_i')-\kappa(S_{i-1})} = \frac{|S_{i-1}|-|S_i|-|E_q|}{\kappa(S_i)-\kappa(S_{i-1}) - k_q + 1} \leq  \frac{|S_{i-1}|-|S_i|}{\kappa(S_i)-\kappa(S_{i-1})}
    \]
    By optimality of $S_i$, equality must hold. 

    Hence it must be that if $k_q>1$ then $(k_1-1)/|E_q|=(k-1)/|\hat{S}_i|$. 
    So the algorithm sets the correct densities for the edges in $\hat{S_i}$ according to Thorup's algorithm. 
\end{proof}


\end{document}